\documentclass[11pt]{article}

\usepackage[letterpaper, margin=1in]{geometry}
\usepackage[utf8]{inputenc}
\usepackage[T1]{fontenc}

\usepackage{amsmath,amssymb,amsthm,mathtools,bm}
\usepackage{booktabs,array,longtable}
\usepackage{graphicx}
\usepackage{multirow}
\usepackage{ragged2e}
\usepackage{enumitem}
\usepackage{natbib}
\usepackage[dvipsnames]{xcolor}
\usepackage{tikz}
\usetikzlibrary{arrows.meta,positioning,fit,shapes.geometric}
\usepackage[colorlinks=true,citecolor=MidnightBlue,linkcolor=MidnightBlue,urlcolor=MidnightBlue]{hyperref}

\graphicspath{{numerical_figures/}}

\newcommand{\E}{\mathbb{E}}
\newcommand{\Pp}{\mathbb{P}}
\newcommand{\R}{\mathbb{R}}

\newcommand{\F}{\mathcal{F}}
\newcommand{\G}{\mathcal{G}}
\newcommand{\D}{\mathcal{D}}
\newcommand{\A}{\mathcal{A}}

\newcommand{\given}{\mid}
\newcommand{\iid}{\stackrel{\mathrm{iid}}{\sim}}
\newcommand{\ind}{\mathbf{1}}

\newcommand{\expit}{\operatorname{expit}}
\newcommand{\argmax}{\operatorname*{arg\,max}}

\newcommand{\norm}[1]{\left\lVert #1 \right\rVert}

\theoremstyle{plain}
\newtheorem{theorem}{Theorem}[section]
\newtheorem{proposition}[theorem]{Proposition}
\newtheorem{corollary}[theorem]{Corollary}

\theoremstyle{definition}
\newtheorem{definition}[theorem]{Definition}
\newtheorem{example}[theorem]{Example}

\theoremstyle{remark}
\newtheorem{remark}[theorem]{Remark}

\title{\bfseries Target-Oriented Statistical Compression: Sufficiency, Reverse Martingales, and Sequential Monitoring
}

\author{
Yuan-chin Ivan Chang\\[3pt]
\small Institute of Statistical Science, Academia Sinica\\
\small 128 Academia Road, Section 2, Nankang, Taipei 11529, Taiwan\\
\small \href{mailto:ycchang@as.edu.tw}{ycchang@as.edu.tw}
}

\date{\today}

\begin{document}

\maketitle

\begin{abstract}
Statistical procedures rarely use all available data: a sufficient statistic
discards features irrelevant to a parameter; a maximum likelihood estimate
compresses an empirical objective into an optimizing point; a hidden state in a
sequential model compresses a history into a learned representation.  We
develop a unified account of these practices as \emph{target-oriented
statistical compression} and use it to motivate a novel three-condition
sequential stopping rule, $\tau_{\mathrm{RM}}$, for practical boundary
declarations in binary monitoring problems.

The central object is the conditional target process $M_n=\E(Z\mid\G_n)$,
where $Z$ is the inferential target and $\G_n=\sigma(T_n)$ is the information
retained by the compression map $T_n$.  Arranging $(\G_n)$ as a decreasing
filtration makes $(M_n)$ a reverse martingale converging to
$M_\infty=\E(Z\mid\G_\infty)$.  Approximate summaries produce reverse
quasi-martingale defects $\delta_n$ quantifying information loss; the
observable $r_n=|M_n-M_{n-1}|$ serves as a stability proxy.  The rule
\[
  \tau_{\mathrm{RM}}=\inf\bigl\{n\ge n_{\min}:
  B_n\le\varepsilon,\; W_n\le w,\; r_n\le\eta\bigr\}
\]
requires boundary closeness, uncertainty localization, and trajectory
stability simultaneously.  When the retained summary is exactly sufficient,
$\delta_n=0$ and $\tau_{\mathrm{RM}}$ reduces automatically to the
two-condition form.  Simulation studies across Bernoulli, Gaussian, Poisson,
and logistic settings confirm substantial reductions in false boundary
declaration rates relative to boundary-only and two-condition alternatives.
\end{abstract}

\smallskip
\noindent\textbf{Keywords:}
reverse martingale; sufficient statistic; quasi-martingale; maximum likelihood;
information compression; logistic separation; confidence sequence; sequential inference.

\smallskip
\noindent\textbf{MSC 2020:}
Primary 60G42, 62F10, 62L12; Secondary 62F03, 62J12.

\section{Introduction}

Statistical procedures rarely use the full observed path.  A sample mean forgets
order, extremes, and local fluctuations; a sufficient statistic discards
features irrelevant to a parameter under a specified model; a maximum likelihood
estimate compresses an empirical objective into a single optimizing parameter;
a hidden state in a sequential prediction system compresses a history into a
learned representation.  These examples share one principle:
{\it good statistical summaries remember what matters for the target,}
not necessarily every detail of the realized data path.

\paragraph{The compression triple and its stopping-rule consequence.}
This article develops that principle as a theory of target-oriented
information compression, and uses it to motivate a sequential methodology
for practical boundary declarations.  The key object is the triple
\[
  (T_n,\;\G_n=\sigma(T_n),\;M_n=\E(Z\mid\G_n)),
\]
where $T_n$ is a data-compression map, $\G_n$ records the retained
information (see \eqref{eq:target_projection}), and $M_n$ is the conditional
target process.  When $(\G_n)$ forms a decreasing filtration, $M_n$ is a
reverse martingale converging to $M_\infty=\E(Z\mid\G_\infty)$.  A run of all
failures in Bernoulli data, a separated logistic regression, or a near-zero
sequential risk score may tempt an exact statement such as ``the event
probability is zero.''  Finite data justify no such claim.  The distinction is
fundamental: exact boundary degeneracy is a property of $M_\infty$, whereas
finite boundary closeness is only a diagnostic.

The practical consequence is the three-condition stopping rule
\begin{equation}\label{eq:tau_rm_intro}
  \tau_{\mathrm{RM}}
  =\inf\bigl\{n\ge n_{\min}:
  B_n\le\varepsilon,\;W_n\le w,\;r_n\le\eta\bigr\},
\end{equation}
where $B_n=\min\{M_n,1-M_n\}$ measures boundary closeness, $W_n$ is the
width of a time-uniform confidence sequence or posterior credible interval for
$M_\infty$, and $r_n=|M_n-M_{n-1}|$ is the empirical stability proxy for the
quasi-martingale defect~\eqref{eq:quasi_defect}.  Compared with standard
sequential methods, $\tau_{\mathrm{RM}}$ addresses a complementary question:
the SPRT \citep{wald1945} tests a simple-versus-simple hypothesis about the
mean, and CUSUM \citep{siegmund1985} detects a sustained shift, while
$\tau_{\mathrm{RM}}$ asks whether the conditional target is stably near a
boundary, integrating closeness, uncertainty, and coherence into one criterion.

\paragraph{Exact versus approximate compression.}
When the retained summary is exactly sufficient, the coherence defect
$\delta_n=\E(M_n\mid\G_{n+1})-M_{n+1}$ is identically zero and
Proposition~\ref{prop:exact_reverse_coherence} guarantees that
$\tau_{\mathrm{RM}}\equiv\tau_{2\mathrm{cond}}$, so the stability screen
imposes no delay.  When the summary is only approximately sufficient ---
penalized estimators, risk scores, learned representations --- $\delta_n\ne0$
and $r_n\le\eta$ provides genuine additional protection against premature
declarations.

\paragraph{New contributions.}
The applied stopping procedures, finite-sample error bounds, and Bernoulli,
logistic, and RAND~HIE numerical evidence are developed in the companion paper
\citet{chang2025rm}.  The present paper contributes the compression triple as a
unified organizing principle; the tail-$\sigma$-field construction of decreasing
filtrations (Remark~\ref{rem:decreasing_filtration}); the reverse quasi-martingale
defect $\delta_n$ and its connection with $r_n$ (Remark~\ref{rem:delta_to_r});
the structural reduction $\tau_{\mathrm{RM}}\equiv\tau_{2\mathrm{cond}}$ for
exactly sufficient summaries (Proposition~\ref{prop:exact_reverse_coherence});
and numerical calibration across Gaussian, Poisson, public-health surveillance,
and quasi-martingale settings not treated in the companion paper
(Section~\ref{sec:discussion}).  Section~\ref{sec:classical} recasts the Normal,
Bernoulli, and Poisson sufficiency examples as lossless-compression anchors;
the algebraic derivations are in Appendix~\ref{app:sufficiency_derivations}.

\paragraph{Organisation.}
Section~\ref{sec:compression} develops the mathematical foundation of
target-oriented compression, including the reverse quasi-martingale defect and
the decreasing-filtration construction.  Section~\ref{sec:classical} recasts
exact sufficiency and maximum likelihood as lossless and stabilized empirical
compression.  Section~\ref{sec:rm_boundary} applies the framework to boundary
degeneracy in sequential binary problems and defines the three-condition stopping
rule $\tau_{\mathrm{RM}}$ (Section~\ref{sec:stopping}), including its
error-control guarantee (Corollary~\ref{cor:error_control}) and parameter
selection guidance (Remark~\ref{rem:tuning}).  Section~\ref{sec:discussion}
presents numerical calibration studies.  Proofs and data details are in the
Appendix and Supplementary Material.

\section{Target-oriented compression: mathematical foundation}
\label{sec:compression}

Let \(X_1,\ldots,X_n\) denote observed data, let
\(T_n=T_n(X_1,\ldots,X_n)\) be a statistic (the compression map), and let
\(\G_n=\sigma(T_n)\) be its associated retained information field.
For a target quantity \(Z\in L^1\) the corresponding target projection is
\begin{equation}
\label{eq:target_projection}
    M_n=\E(Z\given\G_n),
\end{equation}
where \(\G_n=\sigma(T_n)\) or a related coarsened \(\sigma\)-field.
This ordering is important.  The statistic determines what is remembered; the
\(\sigma\)-field records the information retained; the conditional expectation
projects the target onto that information.

The full path contains ordering, local jumps, extremes, clusters, and other
details.  The statistic keeps only selected information.  The value of that
compression depends on the target.  If the target is the population mean, the
sample mean
\[
    \bar X_n=\frac1n\sum_{i=1}^n X_i
\]
is a natural compression.  It discards order and much of the pathwise shape but
preserves central level.  If the target is variance, \(\bar X_n\) alone is
insufficient and one must also keep a spread summary such as
\[
    S_n^2=\frac{1}{n-1}\sum_{i=1}^n(X_i-\bar X_n)^2.
\]
If the target is distributional shape, the empirical distribution function
\[
    F_n(t)=\frac1n\sum_{i=1}^n\ind\{X_i\leq t\}
\]
is a richer compression than the mean.  If the target is a pairwise or
rank-based functional, a U-statistic
\[
    U_n=\binom{n}{m}^{-1}
    \sum_{1\leq i_1<\cdots<i_m\leq n}
    h(X_{i_1},\ldots,X_{i_m})
\]
may be the relevant compressed object.

\begin{table}[t]
\centering
\caption{Examples of target-oriented statistical compression.}
\label{tab:compression}
\begin{tabular}{lll}
\toprule
Model or target & Natural compression & Information retained\\
\midrule
Normal mean, known variance & \(\bar X_n\) & central level\\
Normal mean and variance & \((\bar X_n,S_n^2)\) & level and spread\\
Bernoulli probability & \(\sum_iX_i\) & success count\\
Distribution function & \(F_n\) & empirical shape\\
Pairwise functional & \(U_n\) & kernel average\\
Sequential prediction & hidden state \(H_t\) & task-relevant history\\
\bottomrule
\end{tabular}
\end{table}

Sequential observation is usually described by an increasing filtration
\(\F_n^{\mathrm{raw}}=\sigma(X_1,\ldots,X_n)\), with
\(\F_1^{\mathrm{raw}}\subseteq\F_2^{\mathrm{raw}}\subseteq\cdots\).
This is a correct and natural viewpoint for data arrival.  However, it is not always the correct
viewpoint for compression.  The sample mean, for example, is generally not a
martingale with respect to the raw-data filtration because
\[
    \E(\bar X_{n+1}\given \F_n^{\mathrm{raw}})
    =
    \frac{n\bar X_n+\mu}{n+1},
\]
which is not equal to \(\bar X_n\) unless \(\bar X_n=\mu\).

A compression viewpoint instead uses decreasing information,
\(\G_1\supseteq\G_2\supseteq\G_3\supseteq\cdots\), where later
\(\sigma\)-fields contain less information.  This formalizes the idea of
moving from more detailed summaries to coarser summaries.

\begin{remark}[Constructing a decreasing filtration from sequential statistics]
\label{rem:decreasing_filtration}
For a single statistic \(T_n=T_n(X_1,\ldots,X_n)\), the \(\sigma\)-field
\(\sigma(T_n)\) is not automatically contained in \(\sigma(T_{n-1})\);
richer data at step \(n\) can make \(\sigma(T_n)\) \emph{larger} than
\(\sigma(T_{n-1})\).  The standard construction that guarantees a decreasing
filtration is the \emph{tail \(\sigma\)-field}:
\begin{equation}
    \G_n = \sigma(T_k : k\geq n),
    \qquad n\geq1.
\end{equation}
Here \(\G_n\) is generated by all summaries from step \(n\) onward, so
\(\G_1\supseteq\G_2\supseteq\cdots\) holds by construction.  Equivalently,
one can work with a fixed terminal horizon \(N\) and define
\(\G_n=\sigma(T_n,T_{n+1},\ldots,T_N)\).  When \(T_n\) is an exact
sufficient statistic (such as the running mean or running proportion) whose
\(\sigma\)-field equals that of the full data up to \(n\), the tail construction
reduces to the natural backward filtration.  In approximate settings (penalized
estimators, hidden states), the same tail construction applies and a coherence
defect (defined formally as \(\delta_n\) in \eqref{eq:quasi_defect} below)
quantifies the departure from the exact case.
\end{remark}

\begin{definition}[Reverse martingale]
Let \((\G_n)_{n\geq1}\) be a decreasing filtration.  An integrable,
\(\G_n\)-measurable sequence \((M_n)\) is a reverse martingale if
\[
    \E(M_n\given \G_{n+1})=M_{n+1},
    \qquad n\geq1.
\]
\end{definition}

\begin{theorem}[Conditional target reverse martingale]
\label{thm:conditional_rm}
Let \(Z\in L^1\), let \(\G_1\supseteq\G_2\supseteq\cdots\) be a decreasing
filtration, and define \(M_n=\E(Z\given\G_n)\) as in \eqref{eq:target_projection}.
Then \((M_n,\G_n)\) is a reverse martingale.  Moreover, if
\(\G_\infty=\cap_{n\geq1}\G_n\), then
\[
    M_n\longrightarrow M_\infty=\E(Z\given\G_\infty)
\]
almost surely and in \(L^1\).
\end{theorem}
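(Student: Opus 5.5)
The reverse-martingale property follows directly from the tower property of conditional expectation. Each $M_n=\E(Z\given\G_n)$ is $\G_n$-measurable, and it is integrable because $\E|M_n|\le\E|Z|<\infty$ by conditional Jensen. Since $\G_{n+1}\subseteq\G_n$, the tower property gives
\[
\E(M_n\given\G_{n+1})=\E\bigl(\E(Z\given\G_n)\given\G_{n+1}\bigr)=\E(Z\given\G_{n+1})=M_{n+1}.
\]

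For convergence, I will not invoke the general reverse martingale convergence theorem as a black box. Instead I will reduce it to the forward case on a finite horizon. Fix $N$ and set $\mathcal{H}_k=\G_{N-k}$ for $k=0,\dots,N-1$. This is an increasing filtration, and $(M_{N-k})_k$ is a martingale with respect to it. Doob's upcrossing inequality applied to this finite martingale bounds the expected number of upcrossings of any rational interval $[a,b]$ by $(M_N,\dots,M_1)$ by $(\E|M_1|+|a|)/(b-a)\le(\E|Z|+|a|)/(b-a)$. This bound does not depend on $N$. Letting $N\to\infty$ by monotone convergence shows that the whole sequence $(M_n)$ has finitely many upcrossings of every rational interval almost surely. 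Hence $M_n$ converges a.s. to a limit $M_\infty'$ in $[-\infty,\infty]$, and Fatou's lemma with $\E|M_n|\le\E|Z|$ shows the limit is finite a.s.

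Next comes uniform integrability, which is the step needing the most care. The family $\{\E(Z\given\mathcal{A}):\mathcal{A}\subseteq\mathcal{F}\}$ is uniformly integrable for any $Z\in L^1$. To see this, use $\Pp(|M_n|>K)\le\E|Z|/K$ and
\[
\E\bigl(|M_n|\ind\{|M_n|>K\}\bigr)\le\E\bigl(|Z|\ind\{|M_n|>K\}\bigr),
\]
then apply absolute continuity of the integral of $|Z|$. Uniform integrability together with a.s. convergence gives $L^1$ convergence.

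Finally, I identify the limit. Each $M_m$ with $m\ge n$ is $\G_n$-measurable, so $M_\infty'$ (taken as a $\limsup$) is $\G_n$-measurable for every $n$, and therefore $\G_\infty$-measurable. For $A\in\G_\infty\subseteq\G_n$ we have $\E(M_n\ind_A)=\E(Z\ind_A)$. The $L^1$ convergence lets us pass to the limit, giving $\E(M_\infty'\ind_A)=\E(Z\ind_A)$. Hence $M_\infty'=\E(Z\given\G_\infty)=M_\infty$ a.s.

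The main obstacle is the almost-sure convergence. Care is needed in transferring the upcrossing bound from finite horizons to the infinite backward sequence; the uniformity in $N$ of the bound is what makes this work. The remaining steps are routine.
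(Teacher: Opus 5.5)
Your proof is correct, and it differs from the paper's only in the convergence step. The reverse-martingale step is identical to the paper's: the tower property with $\G_{n+1}\subseteq\G_n$. For convergence, however, the paper simply cites the reverse-martingale convergence theorem, whereas you reprove it. Your argument uses time-reversal on a finite horizon, Doob's upcrossing inequality with the $N$-free bound $(\E|Z|+|a|)/(b-a)$, and uniform integrability of $\{\E(Z\given\mathcal A)\}$ for $L^1$ convergence. You then identify the limit from $\E(M_n\ind_A)=\E(Z\ind_A)$ for $A\in\G_\infty$.

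The citation buys brevity for a textbook fact (L\'evy's downward theorem). Your route buys self-containedness and shows exactly where $Z\in L^1$ is used. In particular, it explains why $L^1$ convergence comes for free here, unlike for forward martingales. It also makes explicit that the limit is $\E(Z\given\G_\infty)$ rather than the $\E(M_1\given\G_\infty)$ of the generic backward-martingale statement; the paper leaves that tower-property identification to the reader.

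Two points you use tacitly are worth stating. First, monotone convergence applies because the upcrossing count of $(M_N,\dots,M_1)$ is nondecreasing in $N$: prepending a term cannot remove upcrossings. Second, finiteness of these backward-read counts for all rational $a<b$ excludes $\liminf_n M_n<a<b<\limsup_n M_n$. Infinitely many visits below $a$ and above $b$ would produce arbitrarily many crossings in either reading direction.
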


\begin{proof}
Since \(\G_{n+1}\subseteq\G_n\), the tower property gives
\[
    \E(M_n\given\G_{n+1})
    =
    \E\{\E(Z\given\G_n)\given\G_{n+1}\}
    =
    \E(Z\given\G_{n+1})
    =
    M_{n+1}.
\]
The convergence assertion is the reverse-martingale convergence theorem
\citep{ville1939,robbins1970,doob1953,bjork1996,williams1991,kallenberg2002,durrett2019}.
\end{proof}

This theorem says that the reverse-martingale structure belongs naturally to
the conditional target process.  It does not say that every statistic is a
reverse martingale.

\begin{remark}[Two interpretations of the target \(Z\), and a unified uncertainty metric]
\label{rem:target_types}
The target $Z\in L^1$ admits two structurally distinct interpretations.

\begin{enumerate}[label=(\alph*)]
\item \textbf{Frequentist predictive (Case~A).}
      Set $Z=Y_{n+1}$ or $Z=h(X_{n+1},X_{n+2},\ldots)$.  Then
      $M_n=\E(Z\mid\G_n)$ is a sequence of predictive means converging to the
      long-run conditional prediction $M_\infty$.  No prior is needed.
      The uncertainty width $W_n$ is the width of a time-uniform confidence
      sequence for $M_\infty$; under the \citet{howard2021} construction,
      $W_n\asymp C\sqrt{(\log n)/n}$ for a constant $C$ depending on the
      variance proxy.  For bounded targets $M_n\in[0,1]$ (the Bernoulli and
      logistic settings of our numerical studies), the betting-based confidence
      sequences of \citet{waudbysmith2023} yield sharper widths and are
      directly applicable.

\item \textbf{Bayesian posterior contraction (Case~B).}
      Set $Z=\theta$ with a prior.  Then $M_n=\E(\theta\mid\G_n)$ is the
      posterior mean and $M_\infty$ is its limiting value.  Under a conjugate
      model with a sufficient statistic, the posterior mean is a deterministic
      function of that statistic and the reverse-martingale convergence encodes
      posterior contraction \citep{fong2023}.  Here $W_n$ is the width of the
      $(1-\alpha)$ posterior credible interval.
\end{enumerate}

The two interpretations share the same $\tau_{\mathrm{RM}}$ scorecard because
both $W_n$ sequences are \emph{uncertainty widths that shrink to zero under
the correct model} --- whether from a confidence-sequence or a posterior
contraction argument.  Concretely, for Bernoulli data with a Jeffreys prior
(Case~B), $W_n$ is the Beta credible interval width $\approx
2z_{\alpha/2}/\sqrt{n+1}$, which matches the confidence-sequence rate
$O(n^{-1/2})$ from Case~A.  The stopping rule therefore applies uniformly
across both frameworks; only the formula for $W_n$ differs.
The numerical studies use Jeffreys posterior means for Bernoulli and Poisson
(Case~B) and ridge-MLE predicted probabilities for logistic regression (Case~A).
\end{remark}

\subsection{Approximate sufficiency and reverse quasi-martingales}

Modern summaries are often not exactly sufficient.  Examples include selected
features, principal components, penalized estimates, risk scores, and hidden
states in recurrent or transformer-based prediction systems
\citep{hochreiter1997,goodfellow2016}.  They may retain
most target-relevant information without retaining it exactly.

Let \(M_n\) be a target process adapted to a decreasing filtration
\((\G_n)\).  Define the reverse quasi-martingale defect
\begin{equation}
\label{eq:quasi_defect}
    \delta_n
    =
    \E(M_n\given\G_{n+1})-M_{n+1},
\end{equation}
where the absolute summability criterion \(\sum_{n=1}^\infty\E|\delta_n|<\infty\)
characterizes the reverse quasi-martingale class.
If \(\delta_n=0\) for all \(n\), the exact reverse-martingale property is
recovered.  If the defects satisfy the summability condition together with
standard integrability conditions, then the process behaves like a reverse
quasi-martingale and retains a convergence theory analogous to stable compression.  
The defect \(\delta_n\) measures the failure of exact
coherence across successive compression levels.  It is therefore a quantitative
measure of information loss or representation mismatch.

\begin{remark}[From theoretical defect \(\delta_n\) to empirical diagnostic \(r_n\)]
\label{rem:delta_to_r}
The observable step-to-step change \(r_n=|M_n-M_{n-1}|\) is the operational
counterpart of the theoretical defect \(\delta_n\) defined in
\eqref{eq:quasi_defect}.  Under the quasi-martingale summability condition
\(\sum_{n\geq1}\E|\delta_n|<\infty\), the reverse quasi-martingale convergence
theorem guarantees \(M_n\to M_\infty\) almost surely, and consequently
\(r_n\to0\) almost surely.  This justifies using $r_n$ as a stopping-rule
component: persistent large values signal either a large defect $\delta_n$
(information leakage) or a slowly converging path (uncertainty not yet
resolved).  When the summability condition fails, $r_n$ need not tend to zero
and the stopping rule correctly refuses to fire.

In contrast to classical boundary-detection tools, $r_n$ plays a different
role.  A CUSUM chart \citep{siegmund1985} accumulates evidence of a mean shift
relative to a reference value; the SPRT \citep{wald1945} computes the
log-likelihood ratio for a pre-specified pair of point hypotheses.  Neither
tool assesses whether the \emph{conditional target} has stabilized near a
boundary.  The $r_n$ component does precisely this: it is the step-to-step
stability screen for the target projection $M_n$, not for the raw data mean.
The three-condition scorecard~\eqref{eq:tau_rm} therefore complements rather
than competes with SPRT and CUSUM (see Section~\ref{subsec:exact} for
numerical comparisons).
\end{remark}

\begin{figure}[t]
\centering
\begin{tikzpicture}[
    node distance=0.75cm,
    box/.style={draw, rounded corners, align=center, minimum width=2.65cm, minimum height=0.9cm},
    smallbox/.style={draw, rounded corners, align=center, minimum width=2.35cm, minimum height=0.75cm},
    arrow/.style={-{Latex[length=2mm]}, thick}
]
\node[box] (data) {full data\\ \(X_{1:n}\)};
\node[box, right=of data] (stat) {summary\\ \(T_n(X_{1:n})\)};
\node[box, right=of stat] (field) {retained field\\ \(\G_n=\sigma(T_n)\)};
\node[box, right=of field] (target) {target projection\\ \(M_n=\E(Z\mid\G_n)\)};
\node[box, below=1.0cm of target] (limit) {limiting target\\ \(M_\infty=\E(Z\mid\G_\infty)\)};
\node[smallbox, below=1.0cm of field] (defect) {coherence defect\\ \(\delta_n\) or \(r_n\)};
\draw[arrow] (data) -- (stat);
\draw[arrow] (stat) -- (field);
\draw[arrow] (field) -- (target);
\draw[arrow] (target) -- node[right, align=left] {reverse\\ convergence} (limit);
\draw[arrow, dashed] (field) -- (defect);
\draw[arrow, dashed] (defect) -- (target);
\node[align=center, above=0.35cm of field] {\(\G_1\supseteq\G_2\supseteq\cdots\)~: Information Squeeze};
\end{tikzpicture}
\caption{The information squeeze: data are compressed into a statistic, the
statistic induces a retained \(\sigma\)-field, and the target is projected onto that
field.  Exact reverse martingales have zero coherence defect; approximate
summaries produce theoretical defects \(\delta_n\) and empirical diagnostics
\(r_n\).}
\label{fig:information_squeeze}
\end{figure}

A compact summary is
\[
\begin{aligned}
\text{full data}
&\longrightarrow
\text{sufficient or approximate statistic}\\
&\longrightarrow
\text{conditional target process}\\
&\longrightarrow
\text{reverse martingale or reverse quasi-martingale}.
\end{aligned}
\]

\subsection{Sufficiency and likelihood under compression}
\label{sec:classical}

Sufficiency and maximum likelihood are classical inference principles that
fit naturally into the compression framework.  Sufficiency is the lossless
ideal: a statistic is sufficient when it preserves all model-relevant
information about the parameter.  Maximum likelihood then acts on the
resulting empirical information.  Neither principle, by itself, requires the
retained statistic, the likelihood, or the estimator to satisfy a
reverse-martingale identity; that structure belongs to the conditional target
process $M_n=\E(Z\mid\G_n)$ built on top of the sufficient information.
The algebraic derivations confirming sufficiency for the Normal, Bernoulli,
and Poisson families are collected in Appendix~\ref{app:sufficiency_derivations};
here we focus on the conceptual anchors needed for the stopping-rule analysis.

\subsubsection{Exact sufficiency as lossless compression}

Sufficiency is a formal version of lossless statistical compression.  Let
\(X=(X_1,\ldots,X_n)\) have distribution \(P_\theta\), and let \(T_n=T_n(X)\).
The statistic \(T_n\) is sufficient for \(\theta\) if the conditional
distribution \(\mathcal{L}_\theta(X\given T_n)\) does not depend on
\(\theta\).  Equivalently, under domination, the factorization criterion \citep{fisher1922,lehmanncasella1998}
states that \(L(\theta;x)=g_\theta(T_n(x))h(x)\), so that all
parameter-dependent information in the likelihood passes through \(T_n\).

If the \(\sigma\)-fields generated by sufficient summaries are arranged as a
decreasing filtration and \(Z\) is a target quantity such as a posterior mean,
predictive probability, likelihood score, risk functional, or
parameter-related functional, then \(M_n=\E(Z\given\sigma(T_n))\)
is the reverse-martingale object by Theorem~\ref{thm:conditional_rm}.  The
sufficient statistic supplies the information field; the conditional target
supplies the martingale.
This distinction is essential:
\[
\begin{array}{ll}
T_n & \text{is a data-compression function,}\\[2pt]
\G_n=\sigma(T_n) & \text{is the retained information,}\\[2pt]
M_n=\E(Z\given\G_n) & \text{is the reverse-martingale target projection.}
\end{array}
\]
Thus a sufficient statistic is not automatically a reverse martingale.  It
becomes part of a reverse-martingale construction through the conditional
expectation of a target.

\begin{example}[Exponential-family lossless compression]\label{ex:exp_family}
For the three models central to our numerical studies, the sufficient statistics
and their reverse-martingale targets are:
\begin{itemize}
\item \emph{Normal} ($X_i\iid N(\mu,\sigma^2)$, $\sigma^2$ known):
      $T_n=\bar X_n$ is sufficient for $\mu$; if $Z=\mu$ under a conjugate
      prior, $M_n=\E(\mu\mid\bar X_n)$ is the posterior mean.
      With both parameters unknown, the joint sufficient statistic is
      $(\bar X_n, S_n^2)$.
\item \emph{Bernoulli} ($X_i\iid\mathrm{Bernoulli}(p)$):
      $T_n=\sum_i X_i$ is sufficient and $\hat p_n=T_n/n$.
      Under a Jeffreys Beta$(1/2,1/2)$ prior, $M_n=(T_n+1/2)/(n+1)$.
\item \emph{Poisson} ($X_i\iid\mathrm{Poisson}(\lambda)$):
      $T_n=\sum_i X_i$ is sufficient and $\hat\lambda_n=\bar X_n$.
      Under a Jeffreys Gamma$(1/2)$ prior, $M_n=(T_n+1/2)/n$.
\end{itemize}
In each case the model determines the sufficient compression; the
reverse-martingale structure follows from Theorem~\ref{thm:conditional_rm}
applied to the posterior mean target.  Algebraic derivations confirming the
factorization criterion in each case are given in
Appendix~\ref{app:sufficiency_derivations}.
\end{example}

\subsubsection{Maximum likelihood as compressed empirical information}

The maximum likelihood estimator is not generally a reverse martingale.  It is
more accurately described as a functional of compressed empirical information.
Let
\[
    \ell_n(\theta)=\sum_{i=1}^n\log f_\theta(X_i),
    \qquad
    \hat\theta_n=\argmax_\theta \ell_n(\theta).
\]
The empirical measure
\[
    P_n=\frac1n\sum_{i=1}^n\delta_{X_i}
\]
gives
\[
    \frac1n\ell_n(\theta)
    =
    \int \log f_\theta(x)\,dP_n(x),
\]
and hence
\[
    \hat\theta_n
    =
    \argmax_\theta
    \int\log f_\theta(x)\,dP_n(x).
\]
Thus the MLE is a functional of the empirical distribution.  In regular
settings, consistency can be read as stabilization of the empirical objective:
\[
    \frac1n\sum_{i=1}^n\log f_\theta(X_i)
    \longrightarrow
    \E_{\theta_0}\{\log f_\theta(X)\}.
\]
The likelihood surface stabilizes and the optimizer converges.

For exponential families,
\[
    f_\theta(x)=h(x)\exp\{\eta(\theta)^\top T(x)-A(\theta)\}.
\]
For iid observations,
\[
    L(\theta;x)
    =
    \prod_{i=1}^n h(x_i)
    \exp\left\{
        \eta(\theta)^\top\sum_{i=1}^nT(x_i)-nA(\theta)
    \right\}.
\]
The statistic \(\sum_iT(X_i)\), or its average, is the sufficient compression.
The MLE is then a deterministic function of that compressed information when it
exists.  This explains why exponential families sit naturally at the
intersection of sufficiency, likelihood, and stable compression.

The caution is as important as the analogy.  The MLE can fail to exist, diverge,
or become unstable in boundary cases.  Logistic separation is a central example.
The compression view therefore distinguishes stable likelihood compression from
pathological boundary movement.

\subsection{Application: practical boundary degeneracy}
\label{sec:rm_boundary}

We use boundary degeneracy in sequential binary problems as a concrete
application of the target-oriented compression framework.  The detailed
sequential stopping procedures, finite-sample error bounds, and numerical
evidence for boundary detection in Bernoulli, logistic, and near-degenerate
risk settings are developed in full in the companion paper \citet{chang2025rm}.
The present paper contributes the theoretical foundation: each boundary
phenomenon arises when a finite summary creates the \emph{appearance} of a
boundary limit that the underlying conditional target process has not actually
reached.  The compression language supplies the precise distinction --- exact
boundary degeneracy is a property of \(M_\infty = \E(Y\given\G_\infty)\), not
of any finite \(M_n\) --- and that distinction is exactly the distinction
between lossless and approximate compression established in
Sections~\ref{sec:compression}--\ref{sec:classical}.

Boundary degeneracy means
apparent probabilities near zero or one.  The formal definition should precede
the examples: finite boundary closeness is only a symptom, while exact boundary
degeneracy is a limiting property of the conditional target process.

Let \(Y\in\{0,1\}\) be a target event.  For a decreasing filtration \((\G_n)\),
define \(M_n=\E(Y\given\G_n)\).  By Theorem~\ref{thm:conditional_rm}, \(M_n\)
is a bounded reverse martingale and \(M_n\to M_\infty=\E(Y\given\G_\infty)\)
almost surely and in \(L^1\).  The limiting conditional law, not a finite
estimate, is the object that can be exactly boundary-degenerate.

\begin{definition}[Reverse-martingale boundary degeneracy]
\label{def:rm_boundary}
Let \(Y\in\{0,1\}\) and \(M_n=\E(Y\given\G_n)\).  The process is exactly
boundary-degenerate on an event \(E\) if \(M_\infty\in\{0,1\}\) on \(E\).
For \(\varepsilon\in(0,1/2)\), it is practically boundary-degenerate on \(E\)
if \(M_\infty\leq\varepsilon\) or \(M_\infty\geq1-\varepsilon\) on \(E\).
\end{definition}

This definition deliberately concerns \(M_\infty\).  A finite \(M_n\) may be
close to zero or one because of transient imbalance, early separation, model
instability, or an inadequate compression.  The reverse-martingale question is
whether the limiting conditional law is stably moving toward the boundary.

\begin{example}[Beta-smoothed all-failure boundary trajectory]
Suppose that a future Bernoulli outcome has unknown probability \(p\), and use
a \(\operatorname{Beta}(a,b)\) prior.  After observing \(k\) failures and no
successes in a relevant information window, the posterior mean is
\[
    M_k=\E(p\given \text{\(k\) failures})
    =
    \frac{a}{a+b+k}.
\]
For \(a=b=1/2\),
\[
\begin{array}{c|ccccc}
k&0&1&4&9&19\\
\hline
M_k&0.500&0.250&0.100&0.050&0.025
\end{array}
\]
and \(M_k\to0\) as \(k\to\infty\).  A finite value such as \(M_9=0.05\) is a
practical boundary signal relative to \(\varepsilon=0.05\), not proof of exact
zero probability.  Exact degeneracy is a limiting statement.
\end{example}

\subsubsection{Symptom I: Bernoulli all-failure and all-success paths}

Let \(Y_1,Y_2,\ldots\iid\operatorname{Bernoulli}(p)\), \(0\leq p\leq1\), and
let \(S_n=\sum_{i=1}^nY_i\).  If \(S_n=0\), the MLE \(\hat p_n=0\) is an
estimate, not proof that \(p=0\).  A run of \(n\) consecutive failures supports
the practical-zero statement \(p<\varepsilon\) with type-I error at most
\(\alpha\) under \(p\geq\varepsilon\) whenever
\((1-\varepsilon)^n\leq\alpha\), i.e.,
\(n\geq\log\alpha/\log(1-\varepsilon)\); the symmetric statement holds for
all-success runs.  Formal propositions, proofs, and confidence-sequence
extensions are given in the companion paper.

In the compression framework, the threshold \(n\geq\log\alpha/\log(1-\varepsilon)\)
is an implicit constraint on the uncertainty width \(W_n\): at the Jeffreys
posterior with no observed successes, the \((1-\alpha)\)-credible interval upper
endpoint reaches approximately \(\varepsilon\) precisely at this threshold.
The full \(\tau_{\mathrm{RM}}\) rule (Section~\ref{sec:stopping}) makes this
constraint explicit and extends it beyond the all-failure prefix.

\subsubsection{Symptom II: Logistic regression and separation}

Let \((Y_i,x_i)\), \(Y_i\in\{0,1\}\), \(x_i\in\R^d\), be observed and suppose
\[
    \Pp(Y_i=1\given x_i)=
    \pi_i(\beta)=\expit(x_i^\top\beta)
    =
    \frac{\exp(x_i^\top\beta)}{1+\exp(x_i^\top\beta)}.
\]

Because \(\expit(\cdot)\) maps \(\R\) strictly into \((0,1)\), every finite
coefficient vector \(\beta\in\R^d\) produces strictly interior fitted
probabilities; exact zeros or ones in a logistic fit signal a diverging
coefficient, not a genuine probability.  Complete separation occurs when there
exists a vector \(a\in\R^p\) such that the linear score completely splits the
binary outcomes:
\begin{equation}
\label{eq:complete_separation}
    x_i^\top a>0 \quad\text{for all } y_i=1,
    \qquad\text{and}\qquad
    x_i^\top a<0 \quad\text{for all } y_i=0.
\end{equation}
Moving \(\beta\) along \(ta\), \(t\to\infty\), drives fitted
probabilities toward one for successes and zero for failures; the likelihood
may approach its supremum without attaining it at any finite \(\beta\)
\citep{albert1984}.  Regularization remedies
(Firth's bias-reduced likelihood, ridge logistic regression, and Bayesian
regression with proper priors) that keep the fitted object finite are discussed
in the original references \citep{firth1993,heinze2002,gelman2008}; the
boundary rules based on \(\tau_{\mathrm{RM}}\) (Section~\ref{sec:stopping},
\eqref{eq:tau_rm}) and supporting numerical studies are given in the companion
paper.

The compression interpretation is the new perspective offered here.
Separation is not evidence for exact boundary probabilities; it is evidence that
the \emph{unregularized} likelihood can be driven toward a boundary along a
low-information direction.  In the target-oriented compression framework, the
retained field \(\G_n = \sigma(\hat\beta_n)\) (or its regularized counterpart)
may fail to localize the conditional target \(M_n(x) = \E(Y\given X=x,\G_n)\)
because the compression map has entered a degenerate direction.  A stable
analysis should therefore report \(M_n(x)\) and its uncertainty, not treat a
diverging coefficient as exact probabilistic knowledge.  This is the logistic
instance of the general principle: approximate compression produces defects
\(\delta_n\), and those defects must be small before a boundary declaration
is warranted.

\subsection{Dynamic stopping: closeness, uncertainty, and stability}
\label{sec:stopping}

We now formally define the three-condition reverse-martingale stopping rule
\(\tau_{\mathrm{RM}}\), which was first introduced and applied in
\citet{chang2025rm}.  The compression framework of
Sections~\ref{sec:compression}--\ref{sec:classical} provides a theoretical
interpretation for each component.  A practical boundary declaration should
require three simultaneous signals:
\begin{enumerate}[label=(\roman*)]
    \item boundary closeness, so the current target projection is near zero or
    one;
    \item uncertainty control, so the limiting target is statistically localized;
    \item trajectory stability, so the near-boundary state is not a transient
    artifact of the compression path.
\end{enumerate}

\noindent Define the boundary distance
\(
    B_n=\min\{M_n,1-M_n\}.
\)
Let \(W_n\) denote an uncertainty width, such as the width of a confidence
sequence, profile-likelihood interval, posterior credible interval, bootstrap
window, or predictive-variance band.  Let \(r_n\geq0\) denote an empirical
stability defect.  To operationalize this, we propose the target-oriented
reverse-martingale stopping rule:
\begin{equation}
\label{eq:tau_rm}
    \tau_{\mathrm{RM}}
    =
    \inf\left\{
        n\geq n_{\min}:
        B_n\leq\varepsilon,\quad
        W_n\leq w,\quad
        r_n\leq\eta
    \right\},
\end{equation}
where \(n_{\min}\) is a nominal burn-in sample size, \(\varepsilon\) is the
boundary guard threshold, \(w\) is the maximum tolerable uncertainty width, and
\(\eta\) is the structural stability tolerance.  The two-condition rule, which
omits the stability screen, is
\begin{equation}
\label{eq:tau_2cond}
    \tau_{2\mathrm{cond}}
    =
    \inf\left\{
        n\geq n_{\min}:
        B_n\leq\varepsilon,\quad
        W_n\leq w
    \right\}.
\end{equation}
Boundary closeness alone is not enough; small uncertainty without boundary
closeness is not enough; and both can still be misleading if the trajectory is
unstable.

\begin{remark}[Choosing the tuning parameters \((\varepsilon,w,\eta,n_{\min})\)]
\label{rem:tuning}
Each parameter has a distinct role and a natural calibration anchor.
\begin{enumerate}[label=(\alph*)]
\item \emph{Boundary threshold \(\varepsilon\).}  This is a problem-specific
effect size: the smallest probability that would be scientifically meaningful
to distinguish from zero (or one).  In public-health surveillance,
\(\varepsilon\) is often set to a published minimum clinically important
difference or a regulatory action threshold, e.g.\ the CDC blood-lead
reference value or an ILI epidemic threshold.  It plays the same role as the
indifference zone in sequential testing.

\item \emph{Uncertainty width \(w\).}  This should be tied to the coverage
level \(\alpha\) and the desired precision: \(w\) is the maximum credible-interval
or confidence-sequence width that is still narrow enough to be scientifically
conclusive.  A natural default is \(w=2\varepsilon\), so that the interval
\(I_n\subseteq[0,\varepsilon]\) requires its half-width to be at most
\(\varepsilon\).  Under the Howard et al.~\citeyearpar{howard2021} Bernoulli
confidence sequence, \(W_n\leq w\) is met at a sample size of order
\(w^{-2}\log(1/\alpha)\).

\item \emph{Stability tolerance \(\eta\).}  For exactly sufficient summaries,
\(r_n\) decays to machine precision (confirmed numerically: \(\max r_n <
10^{-10}\) across Studies~1--4); setting \(\eta\) at machine precision
effectively disables the screen in exact cases while catching genuine
instability in approximate ones.  For approximate summaries (logistic,
neural-network), a data-driven default is the 95th percentile of \(r_n\)
observed during the burn-in window \([1,n_{\min}]\): the stability screen then
fires only when the trajectory is unusually volatile relative to its own
early-phase behaviour.

\item \emph{Burn-in \(n_{\min}\).}  A minimum of \(n_{\min}=\lceil
\log\alpha/\log(1-\varepsilon)\rceil\) observations is required before the
all-failure Bernoulli criterion has any power (Appendix~\ref{app:benchmarks}).
In practice, \(n_{\min}=30\) is a conventional lower bound; richer models
or higher-dimensional covariates may warrant larger values.
\end{enumerate}
\end{remark}

The compression framework developed in
Sections~\ref{sec:compression}--\ref{sec:classical} now provides a precise
theoretical interpretation for each condition.
\emph{Boundary closeness} \(B_n\leq\varepsilon\) says the current target
projection is near a boundary.  \emph{Uncertainty control} \(W_n\leq w\) says
the limiting target \(M_\infty\) is statistically localized.
\emph{Trajectory stability} \(r_n\leq\eta\) says the empirical
defect~--- the operational proxy for the theoretical defect \(\delta_n\) of
\eqref{eq:quasi_defect}~--- is small enough that the compression is behaving
like an exact reverse martingale.  When all three conditions hold, the
compression is both near the boundary \emph{and} behaving losslessly.
That is the correct standard for a practical boundary declaration.
The new result in this paper is that when the retained summary is exactly
sufficient (\(\delta_n=0\) identically), the rule reduces to
\(\tau_{\mathrm{RM}}\equiv\tau_{2\mathrm{cond}}\): the stability screen is
automatically satisfied and contributes no additional delay (confirmed
numerically in Studies~1--4 below).

{
\begin{proposition}[Exact reverse coherence eliminates the theoretical defect]
\label{prop:exact_reverse_coherence}
Suppose the retained summaries define an exact reverse-martingale target
process, so that
\[
  \E(M_n\mid \G_{n+1})=M_{n+1}, \qquad n\geq n_{\min}.
\]
Then the theoretical defect
\[
  \delta_n=\E(M_n\mid\G_{n+1})-M_{n+1}
\]
is identically zero.  Consequently, the third component of
\(\tau_{\mathrm{RM}}\) in \eqref{eq:tau_rm} is redundant only when the
implemented stability diagnostic is defined so that \(\delta_n=0\) implies
\(r_n\leq\eta\), for example when \(r_n=|\delta_n|\), or when the
application-specific diagnostic satisfies \(r_n\leq\eta\) automatically.
In that case, \(\tau_{\mathrm{RM}}=\tau_{2\mathrm{cond}}\), with the usual
convention that \(\inf\emptyset=\infty\).
\end{proposition}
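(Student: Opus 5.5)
The argument has two parts: the defect vanishes by definition, and then the two stopping sets coincide.

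\textbf{Step 1: the defect is zero.} Fix $n\geq n_{\min}$. By the hypothesis $\E(M_n\mid\G_{n+1})=M_{n+1}$, the definition~\eqref{eq:quasi_defect} gives $\delta_n=M_{n+1}-M_{n+1}=0$. Conditional expectations are defined only up to null sets, so this identity holds almost surely. I will state it that way and choose a version of $\E(M_n\mid\G_{n+1})$ for which it holds pointwise. When $M_n=\E(Z\mid\G_n)$ with $(\G_n)$ decreasing, the hypothesis is exactly the tower-property computation in the proof of Theorem~\ref{thm:conditional_rm}. So this case, which covers every exactly sufficient construction of Section~\ref{sec:classical}, satisfies the hypothesis automatically.

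\textbf{Step 2: the stopping sets coincide.} Add the assumption in the statement that $\delta_n=0$ implies $r_n\leq\eta$. The model cases are $r_n=|\delta_n|$, or any diagnostic that satisfies $r_n\leq\eta$ automatically. Then for every $n\geq n_{\min}$ (almost surely),
\[
\{B_n\leq\varepsilon,\ W_n\leq w,\ r_n\leq\eta\}=\{B_n\leq\varepsilon,\ W_n\leq w\},
\]
because the third event contains the whole sample space, up to a null set. Let $N$ be the union over $n$ of these exceptional null sets; it is still null, since it is a countable union. Off $N$, the index sets in \eqref{eq:tau_rm} and \eqref{eq:tau_2cond} are identical subsets of $\{n\geq n_{\min}\}$, so their infima agree. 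This includes the case where both sets are empty, where the convention $\inf\emptyset=\infty$ gives both rules the value $\infty$. Hence $\tau_{\mathrm{RM}}=\tau_{2\mathrm{cond}}$ almost surely, and pointwise once versions are fixed.

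\textbf{Main obstacle.} The only subtle point is the qualifier ``only when''. For the observable proxy $r_n=|M_n-M_{n-1}|$, having $\delta_n=0$ does \emph{not} force $r_n\leq\eta$. A reverse martingale still has nonzero increments; for example, the Jeffreys Bernoulli mean jumps by order $1/n$ at each observation. So the reduction genuinely needs the extra hypothesis linking $\delta_n$ to $r_n$, and the proof should invoke it explicitly rather than deduce it from Step~1. I would add a one-line remark that, in general, one only gets the inclusion $\{\text{three conditions}\}\subseteq\{\text{two conditions}\}$. That inclusion gives $\tau_{\mathrm{RM}}\geq\tau_{2\mathrm{cond}}$ always, with equality under the stated hypothesis.
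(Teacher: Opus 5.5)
Your proposal is correct and follows the paper's proof essentially step for step. In both, $\delta_n=0$ comes directly from the hypothesis, both stress that this alone does not control the implemented diagnostic $r_n$, and under the added link (e.g.\ $r_n=|\delta_n|$, or $r_n\leq\eta$ automatically) the two admissible index sets coincide, giving $\tau_{\mathrm{RM}}=\tau_{2\mathrm{cond}}$ almost surely; your handling of null sets and the general inequality $\tau_{\mathrm{RM}}\geq\tau_{2\mathrm{cond}}$ are sound refinements the paper leaves implicit.
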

}

\begin{proof}
By the assumed exact reverse-martingale property,
\[
    \E(M_n\mid\G_{n+1})=M_{n+1},
    \qquad n\geq n_{\min}.
\]
Therefore
\[
    \delta_n
    =
    \E(M_n\mid\G_{n+1})-M_{n+1}
    =
    0
\]
for all \(n\geq n_{\min}\).
Thus the theoretical reverse-coherence defect vanishes identically.
However, this conclusion alone does not imply that the implemented
diagnostic \(r_n\) is zero, since \(r_n\) may measure a realized empirical
instability rather than the conditional-expectation defect \(\delta_n\).

Consequently, the third condition \(r_n\leq \eta\) is redundant only under
an additional link between \(r_n\) and \(\delta_n\), for example if
\(r_n=|\delta_n|\), or more generally if the diagnostic is defined so that
\(\delta_n=0\) implies \(r_n\leq\eta\).  It is also redundant if the
application-specific construction separately guarantees \(r_n\leq\eta\)
for all \(n\geq n_{\min}\).

Under either of these additional conditions, the admissible index set
\[
\{n\geq n_{\min}: B_n\leq\varepsilon,\; W_n\leq w,\; r_n\leq\eta\}
\]
coincides with
\[
\{n\geq n_{\min}: B_n\leq\varepsilon,\; W_n\leq w\}.
\]
Hence
\[
    \tau_{\mathrm{RM}}=\tau_{2\mathrm{cond}}
\]
almost surely.
\end{proof}

The computational defect \(r_n\) is the operational counterpart of the
theoretical reverse quasi-martingale defect \(\delta_n\) of
\eqref{eq:quasi_defect}, recalled here for convenience:
\[
    \delta_n=\E(M_n\given\G_{n+1})-M_{n+1}.
\]
When conditional expectations can be estimated directly, one may take
\[
    r_n=\left|M_{n+1}-\widehat{\E}(M_n\given\G_{n+1})\right|.
\]
When summaries are latent states, one may use a learned backward projection,
for example
\[
    r_n=\norm{H_n-g_\phi(H_{n+1})}.
\]
Here \(g_\phi\) acts as a \emph{learned reverse-time transition operator}, or
pullback map, from the future hidden state \(H_{n+1}\) back to the present.
It is the parametric analog of the conditional expectation
\(\E(M_n\given\G_{n+1})\) in the exact theoretical case: just as the exact
reverse-martingale identity requires \(\E(M_n\given\G_{n+1})=M_{n+1}\), the
learned operator requires \(g_\phi(H_{n+1})\approx H_n\).  The residual
\(r_n=\norm{H_n-g_\phi(H_{n+1})}\) therefore measures how much the hidden-state
sequence departs from exact reverse coherence, and can be incorporated as a
trainable backward-coherence penalty in the model's loss function.
Thus \(\delta_n\) is the population-level measure of reverse incoherence,
whereas \(r_n\) is the empirical diagnostic used to decide whether the observed
compression trajectory is stable enough for action.  Importantly, \(r_n\) is
not an unbiased estimator of \(\delta_n\).  It is an observable stability
proxy: persistent large values of \(r_n\) indicate either slow convergence of
the conditional target process or a failure of coherence across compression
levels, both of which are reasons to defer a boundary declaration.  When the
summability condition \(\sum_{n\geq1}\E|\delta_n|<\infty\) fails, \(r_n\) need
not tend to zero, and the stopping rule~\eqref{eq:tau_rm} correctly refuses to
fire.

When a time-uniform interval sequence \((I_n)\) satisfies
\(\Pp\{M_\infty\in I_n\text{ for all }n\geq1\}\geq1-\alpha\), the stopping
times
\[
    \tau_0=\inf\{n:I_n\subseteq[0,\varepsilon],\, r_n\leq\eta\},
    \qquad
    \tau_1=\inf\{n:I_n\subseteq[1-\varepsilon,1],\, r_n\leq\eta\}
\]
control the false-declaration probability at level \(\alpha\).  The following
corollary makes this precise; the stability screen \(r_n\leq\eta\) can only
delay stopping, never inflate the error rate.

\begin{corollary}[Error control for \(\tau_{\mathrm{RM}}\)]
\label{cor:error_control}
Let \((I_n)_{n\geq1}\) be a \((1-\alpha)\)-confidence sequence for
\(M_\infty\), that is,
\(\Pp\{M_\infty\in I_n\text{ for all }n\geq1\}\geq1-\alpha\)
\citep{robbins1970,howard2021}.  Define
\(\tau_0=\inf\{n\geq n_{\min}:I_n\subseteq[0,\varepsilon],\,r_n\leq\eta\}\).
Then
\[
    \Pp(\tau_0<\infty \text{ and } M_\infty>\varepsilon)\leq\alpha.
\]
The symmetric statement holds for \(\tau_1\).
\end{corollary}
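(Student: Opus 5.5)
The plan is a direct event-inclusion argument that uses only the defining coverage property of the confidence sequence; the reverse-martingale structure of Theorem~\ref{thm:conditional_rm} is needed only to make \(M_\infty\) a well-defined random variable. Let
\[
  A=\{M_\infty\in I_n\text{ for all }n\geq1\}
\]
be the good coverage event, so that \(\Pp(A^c)\leq\alpha\) by hypothesis. The claim will follow once we show
\[
  \{\tau_0<\infty\}\cap\{M_\infty>\varepsilon\}\subseteq A^c .
\]

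To prove the inclusion, fix an outcome in \(\{\tau_0<\infty\}\cap\{M_\infty>\varepsilon\}\). Set \(n=\tau_0\), which is a finite integer with \(n\geq n_{\min}\). By the definition of \(\tau_0\) as an infimum over integers, the defining conditions hold at \(n\) itself. In particular \(I_n\subseteq[0,\varepsilon]\), and the condition \(r_n\leq\eta\) is simply discarded.

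If the outcome were in \(A\), then \(M_\infty\in I_n\subseteq[0,\varepsilon]\), which contradicts \(M_\infty>\varepsilon\). Hence the outcome lies in \(A^c\). Therefore
\[
  \Pp(\tau_0<\infty,\;M_\infty>\varepsilon)\leq\Pp(A^c)\leq\alpha .
\]

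Two remarks belong in the write-up. First, the stability screen only shrinks the stopping set. Hence \(\{\tau_0<\infty\}\) is contained in the analogous event without the \(r_n\) condition, which makes precise the statement that the screen "can only delay stopping, never inflate the error rate." Second, because the coverage is time-uniform, no optional-stopping argument is needed: the bound holds for any data-dependent stopping rule, including one with the \(n_{\min}\) burn-in.

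The symmetric statement for \(\tau_1\) follows by the same argument with \([1-\varepsilon,1]\) in place of \([0,\varepsilon]\) and \(\{M_\infty<1-\varepsilon\}\) as the error event. Alternatively, one can apply the \(\tau_0\) result to \(1-M_\infty\) with the reflected intervals \(1-I_n\).

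The only real obstacle is measurability: \(\tau_0\) must be a random time and the events involved must be measurable. This holds if the \(I_n\) are random intervals with measurable endpoints and the \(r_n\) are random variables, and I would state that assumption explicitly. The problem of whether the infimum is attained does not arise, since the index set is discrete.
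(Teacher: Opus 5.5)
Your proposal is correct and follows essentially the same argument as the paper. On \(\{\tau_0<\infty\}\) you have \(I_{\tau_0}\subseteq[0,\varepsilon]\), so \(M_\infty>\varepsilon\) forces a failure of the time-uniform coverage event, which has probability at most \(\alpha\). Adding the \(r_n\le\eta\) condition only shrinks the stopping set.
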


\begin{proof}
On the event \(\{\tau_0<\infty\}\), the interval \(I_{\tau_0}\) satisfies
\(I_{\tau_0}\subseteq[0,\varepsilon]\) by construction of \(\tau_0\).
Since \((I_n)\) is a confidence sequence,
\(\Pp\{M_\infty\notin I_n\text{ for some }n\geq1\}\leq\alpha\).
Therefore
\begin{align*}
    \Pp(\tau_0<\infty \text{ and } M_\infty>\varepsilon)
    &\leq
    \Pp\{M_\infty\notin I_{\tau_0},\,\tau_0<\infty\}\\
    &\leq
    \Pp\{M_\infty\notin I_n\text{ for some }n\geq1\}
    \leq\alpha.
\end{align*}
The stability condition \(r_n\leq\eta\) only restricts the stopping set
further, so it cannot increase the probability of a false declaration.
\end{proof}

\begin{table}[t]
\centering
\caption{Operational scorecard for the three components of
\(\tau_{\mathrm{RM}}\).}
\label{tab:scorecard}
\small
\setlength{\tabcolsep}{4pt}
\begin{tabular}{p{0.18\linewidth}p{0.25\linewidth}p{0.25\linewidth}p{0.23\linewidth}}
\toprule
Application setting & Closeness metric \(B_n\) & Uncertainty metric \(W_n\) & Stability defect \(r_n\)\\
\midrule
Bernoulli trials &
\(\min\{\hat p_n,1-\hat p_n\}\) or smoothed posterior distance &
Clopper--Pearson interval or confidence-sequence width &
posterior-mean step change \(|M_n-M_{n-1}|\)\\[2pt]
Logistic separation &
boundary distance of fitted probabilities over covariate region &
profile-likelihood, penalized, or posterior interval width &
coefficient divergence velocity or predictive-surface change\\[2pt]
Deep sequential learning &
risk-score boundary distance &
predictive variance or ensemble interval width &
backward-coherence penalty \(\norm{H_n-g_\phi(H_{n+1})}\)\\[2pt]
Dynamic treatment regimes &
distance of action-specific risk contrasts from decision boundary &
uncertainty in treatment-risk contrasts &
hidden-state or policy-advantage stability defect\\
\bottomrule
\end{tabular}
\end{table}

For covariate-dependent binary prediction, define
\(M_n(x)=\Pp(Y=1\given X=x,\G_n)\).
If the fitted logistic coefficient is finite, then
\(M_n(x)=\expit(x^\top\beta_n)\in(0,1)\).
Under separation, however, \(x^\top\beta_n\) may tend to \(+\infty\) or
\(-\infty\), and \(M_n(x)\) may tend to one or zero.  The stable object is the
limiting conditional probability surface, not necessarily a limiting
coefficient vector.

For a relevant covariate region \(A\subseteq\R^d\), define a region-wise
boundary distance
\[
    B_n(A)=\sup_{x\in A}\min\{M_n(x),1-M_n(x)\}.
\]
A region-wise practical stopping rule can be written
\[
    \tau_{\mathrm{RM}}(A)
    =
    \inf\left\{
        n\geq n_{\min}:
        B_n(A)\leq\varepsilon,\quad
        W_n(A)\leq w,\quad
        \sup_{x\in A}r_n(x)\leq\eta
    \right\}.
\]
This guards against stopping because a coefficient is large but unstable, or
because a small early sample happens to be separated.

In high-dimensional longitudinal prediction, a latent state \(H_t\) computed
from histories \(Z_{1:t}\) can be regularized by a backward-coherence penalty
\[
    L_{\mathrm{RM}}
    =
    \frac{1}{T-1}\sum_{t=1}^{T-1}
    \norm{H_t-g_\phi(H_{t+1})}^2.
\]
The corresponding defect \(r_t=\norm{H_t-g_\phi(H_{t+1})}\)
acts as a stability diagnostic.  A binary risk output
\(M_t=\Pp(Y=1\given H_t)\)
should be treated as actionable boundary evidence only when
\[
    \min\{M_t,1-M_t\}\leq\varepsilon,
    \qquad
    W_t\leq w,
    \qquad
    r_t\leq\eta.
\]

In dynamic treatment-regime problems, the decision is often not whether one
probability is near zero or one, but whether one action has become clearly
preferable.  Let \(a\in\A\) denote a treatment action and let
\(M_t^{(a)}=\Pp(Y^{(a)}=1\given H_t)\) be the patient-specific
potential-outcome risk under action \(a\).  A forward
rule may recommend \(a^*\) when
\[
    \Pp\left\{
        M_t^{(a^*)}-M_t^{(a)}
        \geq\Delta
        \text{ for all }a\neq a^*
        \given\D_t
    \right\}\geq1-\alpha.
\]
The reverse-martingale refinement is to require that this advantage is stable
along the latent trajectory.  This prevents a treatment switch from being
driven only by transient hidden-state fluctuations.  Related sequential
treatment frameworks include \citet{murphy2003}, \citet{robins2004}, and
\citet{chakraborty2013}.

\section{Numerical calibration of the compression scorecard}
\label{sec:discussion}

The studies below are organized by the conceptual distinction the theory makes
rather than by data type: exact-compression settings where
\(\delta_n=0\) and Proposition~\ref{prop:exact_reverse_coherence} predicts
\(\tau_{\mathrm{RM}}\equiv\tau_{2\mathrm{cond}}\); approximate-compression
settings where a non-zero defect process separates the two rules; and
realistic-scale illustrations that show how the scorecard behaves on
measurement scales drawn from public health monitoring.  Studies~1--2 revisit the binary boundary setting from the compression
perspective; Studies~3--4 add Gaussian and Poisson exact-summary settings;
Studies~5--7 (approximate summaries, realistic scale, and quasi-martingale
perturbations) are entirely new contributions of this paper.

The theoretical development leads to a concrete empirical claim: a finite
boundary declaration is credible only when boundary closeness, uncertainty
localization, and trajectory stability are simultaneously present.  To make the
evidence easy to inspect, figures are presented directly from the computational
outputs and supplemented by tabular stopping-time summaries.

Throughout this section we compare the boundary-only rule
\[
    \tau_{\mathrm{bdy}}
    =
    \inf\{n\geq n_{\min}: B_n\leq \varepsilon\},
\]
the two-condition rule \(\tau_{2\mathrm{cond}}\) \eqref{eq:tau_2cond},
and the full reverse-martingale rule \(\tau_{\mathrm{RM}}\)
\eqref{eq:tau_rm}.  Classical benchmarks --- the sequential probability ratio test
(SPRT) and the cumulative sum chart (CUSUM) --- are included where
they illuminate the comparison.  The purpose is not to identify a universally
best detector, but to diagnose what type of evidence each rule treats as a
boundary statement.

\subsection{Exact sufficient summaries: Bernoulli, Gaussian, and Poisson}
\label{subsec:exact}

\paragraph{Bernoulli rare-event experiments}
The Bernoulli experiments are the simplest setting in which the distinction
between a near-zero estimate and a stable near-zero conditional target becomes
visible.  When the data begin with a long all-failure prefix, the empirical
proportion \(\hat p_n=S_n/n\) can hit the boundary long before either
uncertainty or stability is under control.  Figure~\ref{fig:study01_stop}
shows that the boundary-only rule therefore stops extremely early, often near
the minimum sample size \(n_{\min}=30\), whereas \(\tau_{\mathrm{RM}}\)
postpones stopping until the path has settled.  Figure~\ref{fig:study01_traj}
makes the same point pathwise: many trajectories drift below the practical
threshold \(\varepsilon\), but only a subset remain there in a stable manner.

\begin{figure}[t]
\centering
\includegraphics[width=0.94\linewidth]{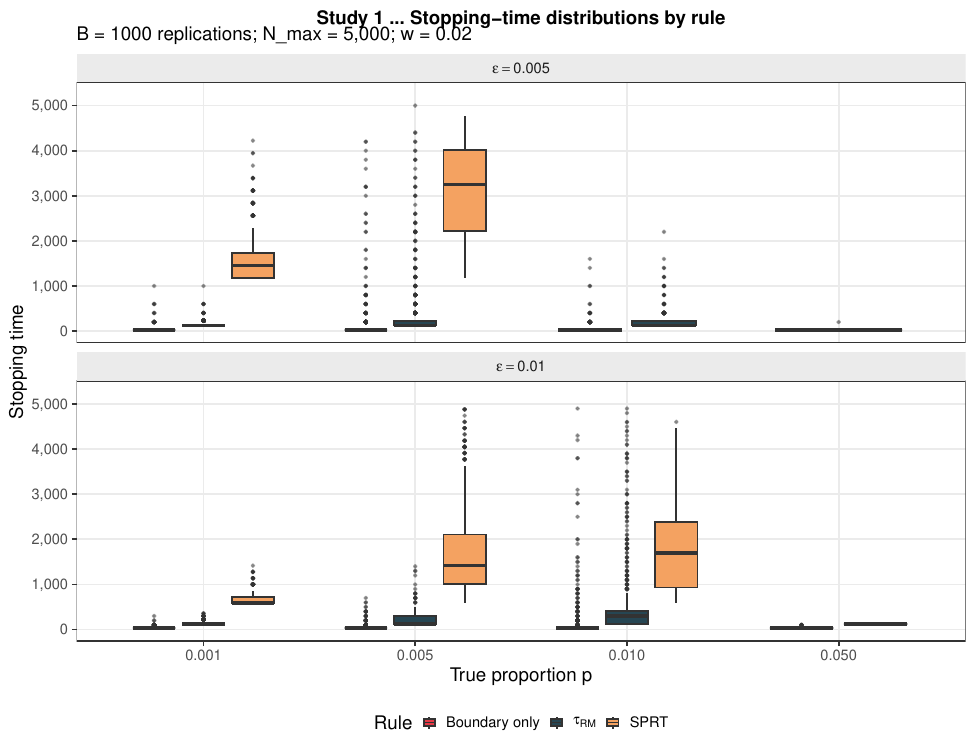}
\caption{Bernoulli rare-event simulations (\(B=1{,}000\), \(N_{\max}=5{,}000\)).
Box plots of stopping-time distributions for the boundary-only rule
(\(\tau_{\mathrm{bdy}}\)), the reverse-martingale rule (\(\tau_{\mathrm{RM}}\)),
and an SPRT benchmark across four true proportions \(p\) and two practical
thresholds \(\varepsilon\in\{0.005,0.010\}\).  The boundary-only rule reacts
to transient all-failure prefixes and fires near the burn-in \(n_{\min}=30\)
in most scenarios, while \(\tau_{\mathrm{RM}}\) requires a joint signal of
boundary closeness, posterior uncertainty control, and trajectory stability
before stopping.  The SPRT is conservative but provides a classical benchmark.}
\label{fig:study01_stop}
\end{figure}

This gap is not merely cosmetic.  Table~\ref{tab:study1_bernoulli} records the
full stopping-time summary across eight \((p,\varepsilon)\) configurations
with \(B=1{,}000\) replications each and \(N_{\max}=5{,}000\).  When
\(p=0.05\) and \(\varepsilon=0.005\), the boundary-only rule produced false
practical boundary declarations in \(22.6\%\) of replicates, while
\(\tau_{\mathrm{RM}}\) reduced that rate to \(0.0\%\); for
\(\varepsilon=0.010\) the corresponding comparison was \(21.7\%\) versus
\(0.4\%\).  The stability defect \(r_n=|M_n-M_{n-1}|\) is identically zero
for the exact reverse martingale \(M_n=S_n/n\) (confirmed numerically:
\(\max|r_n|<10^{-10}\) across all runs), confirming that the uncertainty width
condition \(W_n\leq w\) alone drives the delay relative to
\(\tau_{\mathrm{bdy}}\).  Consequently, \(\tau_{\mathrm{RM}}\equiv\tau_{2\mathrm{cond}}\)
in this setting: the stability screen is automatically satisfied and contributes
exactly zero additional delay.  This is a direct empirical verification of the
theoretical prediction from Remark~\ref{rem:delta_to_r}: when the compression is
exactly sufficient, the defect \eqref{eq:quasi_defect} vanishes and the stopping
rule \eqref{eq:tau_rm} reduces gracefully to the two-condition form.

\begin{figure}[ht!]
\centering
\includegraphics[width=0.88\linewidth]{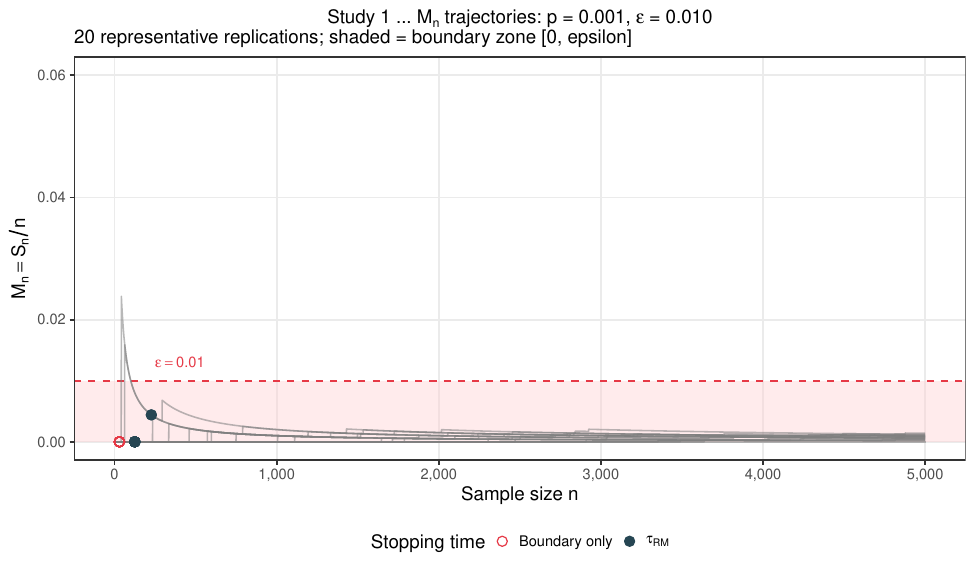}
\caption{Representative Bernoulli trajectories \(M_n=S_n/n\) for \(p=0.001\)
and \(\varepsilon=0.010\) (\(N_{\mathrm{traj}}=20\) replications).  The pink
shaded band marks the practical boundary region \([0,\varepsilon]\).  Circular
markers indicate \(\tau_{\mathrm{bdy}}\) (boundary only) and solid markers
indicate \(\tau_{\mathrm{RM}}\) for each displayed trajectory.  Early entries
into the boundary region are common, but the full reverse-martingale rule does
not interpret every crossing as scientific evidence of boundary degeneracy:
it waits for the posterior width \(W_n\) to also fall below the threshold
\(w=0.02\).}
\label{fig:study01_traj}
\end{figure}

\begin{table}[!h]
\centering
\caption{Study~1 --- Bernoulli \(\tau_{\mathrm{RM}}\) simulation.
\(B=1{,}000\) replications; \(n_{\min}=30\), \(N_{\max}=5{,}000\), \(w=0.02\).
The stability defect \(r_n = |M_n - M_{n-1}|\) is identically zero for the
exact reverse martingale \(M_n = S_n/n\) (confirmed: \(\max|r_n| < 10^{-10}\)
across all runs).  FDR = false-declaration rate (\% of runs stopping when
\(p>\varepsilon\)); `---' = not applicable.
\(P(\hat{p}_\tau=0)\) = fraction of \(\tau_{\mathrm{RM}}\) stops where the
MLE is exactly zero.}
\label{tab:study1_bernoulli}
\small
\begin{tabular}{cc>{\RaggedRight}p{1.8cm}cccccc}
\toprule
\(p\) & \(\varepsilon\) & Rule & \(P(\tau<\infty)\,(\%)\) &
\(\E[\tau \mid \mathrm{stop}]\) & \(\mathrm{Med}(\tau)\) &
FDR\,(\%) & \(P(\hat{p}_\tau=0)\,(\%)\)\\
\midrule
\multirow{3}{*}{0.001} & \multirow{3}{*}{0.005} & Boundary only & 100.0 & \(38 \pm 54\) & 30 & --- & --- \\
 &  & \(\tau_{\mathrm{RM}}\) & 100.0 & \(141 \pm 58\) & 125 & --- & 88.3 \\
 &  & SPRT & 99.9 & \(1627 \pm 478\) & 1451 & --- & --- \\
\midrule
\multirow{3}{*}{0.005} & \multirow{3}{*}{0.005} & Boundary only & 98.3 & \(119 \pm 420\) & 30 & --- & --- \\
 &  & \(\tau_{\mathrm{RM}}\) & 93.3 & \(368 \pm 630\) & 125 & --- & 56.5 \\
 &  & SPRT & 2.4 & \(3080 \pm 1098\) & 3253 & --- & --- \\
\midrule
\multirow{3}{*}{0.010} & \multirow{3}{*}{0.005} & Boundary only & 80.8 & \(49 \pm 104\) & 30 & 80.8 & --- \\
 &  & \(\tau_{\mathrm{RM}}\) & 48.9 & \(218 \pm 213\) & 125 & 48.9 & 63.2 \\
 &  & SPRT & 0.0 & --- & --- & 0.0 & --- \\
\midrule
\multirow{3}{*}{0.050} & \multirow{3}{*}{0.005} & Boundary only & 22.6 & \(31 \pm 11\) & 30 & 22.6 & --- \\
 &  & \(\tau_{\mathrm{RM}}\) & 0.0 & --- & --- & 0.0 & --- \\
 &  & SPRT & 0.0 & --- & --- & 0.0 & --- \\
\midrule
\multirow{3}{*}{0.001} & \multirow{3}{*}{0.010} & Boundary only & 100.0 & \(33 \pm 17\) & 30 & --- & --- \\
 &  & \(\tau_{\mathrm{RM}}\) & 100.0 & \(138 \pm 38\) & 125 & --- & 88.1 \\
 &  & SPRT & 100.0 & \(675 \pm 126\) & 585 & --- & --- \\
\midrule
\multirow{3}{*}{0.005} & \multirow{3}{*}{0.010} & Boundary only & 100.0 & \(47 \pm 61\) & 30 & --- & --- \\
 &  & \(\tau_{\mathrm{RM}}\) & 100.0 & \(213 \pm 139\) & 125 & --- & 55.2 \\
 &  & SPRT & 96.4 & \(1678 \pm 932\) & 1417 & --- & --- \\
\midrule
\multirow{3}{*}{0.010} & \multirow{3}{*}{0.010} & Boundary only & 97.0 & \(129 \pm 416\) & 30 & --- & --- \\
 &  & \(\tau_{\mathrm{RM}}\) & 91.1 & \(532 \pm 767\) & 298 & --- & 30.7 \\
 &  & SPRT & 5.5 & \(1875 \pm 1105\) & 1694 & --- & --- \\
\midrule
\multirow{3}{*}{0.050} & \multirow{3}{*}{0.010} & Boundary only & 21.7 & \(33 \pm 14\) & 30 & 21.7 & --- \\
 &  & \(\tau_{\mathrm{RM}}\) & 0.4 & \(125 \pm 0\) & 125 & 0.4 & 100.0 \\
 &  & SPRT & 0.0 & --- & --- & 0.0 & --- \\
\bottomrule
\end{tabular}
\end{table}

\paragraph{Gaussian and Poisson calibration studies}
The Gaussian study (\(X_i\iid N(\mu,1)\), known variance) serves as a negative
control.  The posterior width \(W_n\) contracts at the deterministic rate
\(2z_{0.025}/\sqrt{n/\sigma^2+1}\), so whether \(W_n\leq w\) is met depends
purely on sample size rather than on the realized data path.
Figure~\ref{fig:study03} illustrates this: the left panel shows the width curve
collapsing toward zero, while the right panel shows that neither the
two-condition rule nor \(\tau_{\mathrm{RM}}\) stops at all when
\(\varepsilon=w=0.05\), because the prior shrinks the width to below
\(w\) only after several thousand steps.  By contrast, the boundary-only rule
stops in nearly all replications with a median of only~41 steps, and CUSUM
stops in all replications because it tracks the mean shift rather than the
boundary.

Table~\ref{tab:study03_normal} records the full summary across five values of
\(\mu\in\{0,0.01,0.02,0.05,0.10\}\).  The result confirms that instantaneous
boundary closeness can be obtained at moderate \(n\) even when the true mean
is at the boundary (\(\mu=0\)), but simultaneous closeness, uncertainty
control, and stability are substantially more demanding.

\begin{figure}[t]
\centering
\includegraphics[width=0.45\linewidth]{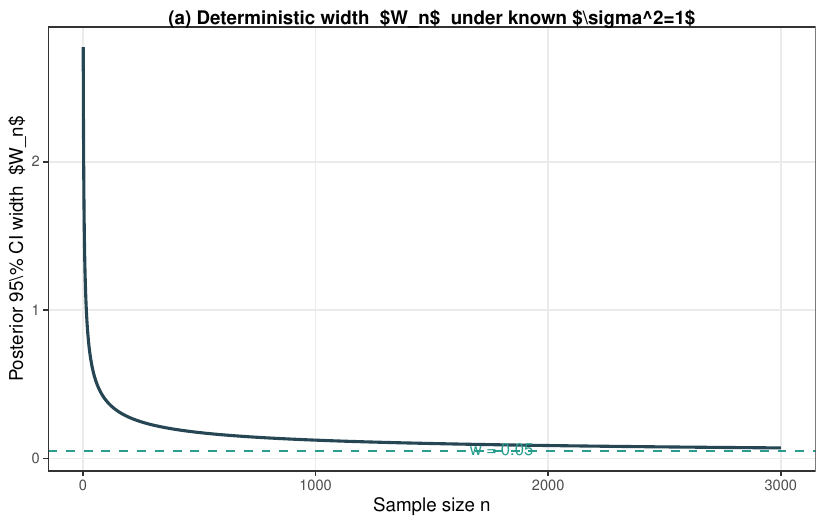}
\includegraphics[width=0.50\linewidth]{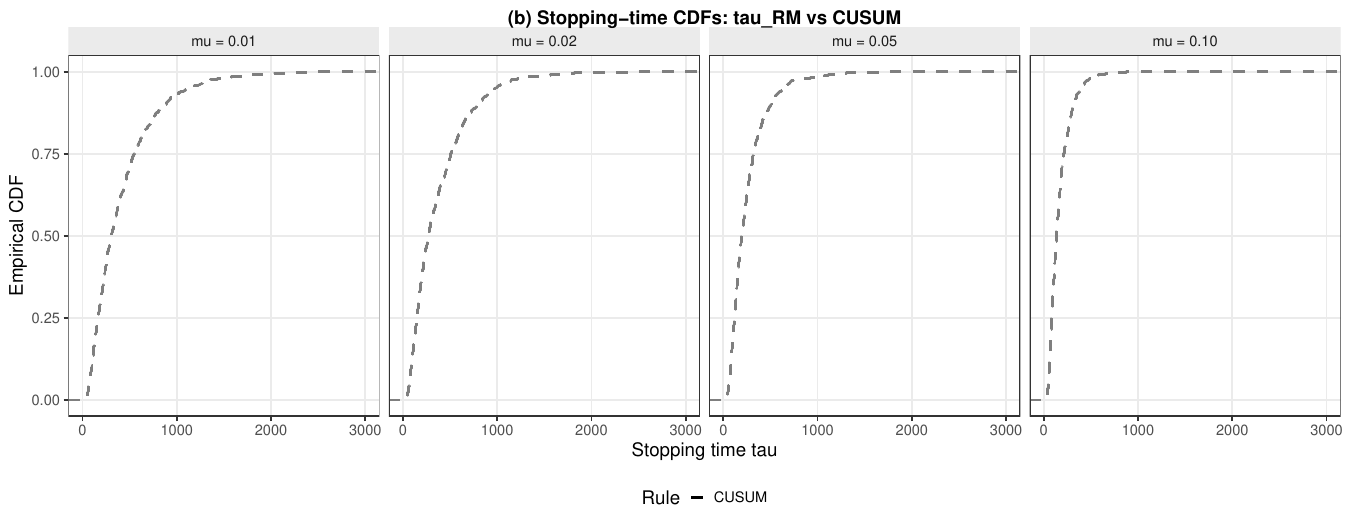}
\caption{Gaussian calibration study (\(X_i\iid N(\mu,1)\), Jeffreys-like
Normal prior).  \textit{Left:} deterministic shrinkage of the posterior
\(95\%\) credible interval width \(W_n\) as a function of sample size.  The
horizontal dashed line marks the width threshold \(w=0.05\); \(W_n\) falls
below \(w\) only at large \(n\), making the two-condition and
\(\tau_{\mathrm{RM}}\) rules non-stopping in the displayed parameter range.
\textit{Right:} empirical CDFs of stopping times for \(\tau_{\mathrm{RM}}\)
(equal to \(\tau_{2\mathrm{cond}}\) here, since \(\delta_n=0\)) and CUSUM
across five values of \(\mu\).  The CUSUM rule stops for a different reason
(mean shift detection), illustrating that the three-condition scorecard
and CUSUM address complementary inferential questions.}
\label{fig:study03}
\end{figure}

\begin{table}[ht]
\centering
\caption{Study~3 (Normal, known \(\sigma^2=1\)): stopping-time summary
(\(B=1{,}000\), \(n_{\min}=30\), \(\varepsilon=0.05\), \(w=0.05\)).
Bayesian prior \(\mathcal{N}(0,1)\).
CUSUM: \(k=0.025\), threshold \(h\) calibrated for
in-control average run length \(\mathrm{ARL}_0=500\).
\(\tau_{\mathrm{RM}} = \tau_{2\mathrm{cond}}\) always because
\(\delta_n = 0\) (exact sufficient statistic).}
\label{tab:study03_normal}
\begin{tabular}{lrrrrrr}
\toprule
\(\mu\) & Rule & \% Stop & Mean \(\tau\) & SD & Median \(\tau\) & \% FDR\\
\midrule
\multirow{3}{*}{0.00} & Boundary only & 99.9 & 93 & 163 & 41 & n/a\\
 & Two-condition (\(= \tau_{\mathrm{RM}}\)) & 0.0 & --- & --- & --- & n/a\\
 & CUSUM & 100.0 & 490 & 421 & 357 & n/a\\
\addlinespace[3pt]
\multirow{3}{*}{0.01} & Boundary only & 99.9 & 106 & 164 & 42 & n/a\\
 & Two-condition (\(= \tau_{\mathrm{RM}}\)) & 0.0 & --- & --- & --- & n/a\\
 & CUSUM & 100.0 & 416 & 356 & 308 & n/a\\
\addlinespace[3pt]
\multirow{3}{*}{0.02} & Boundary only & 99.8 & 116 & 229 & 42 & n/a\\
 & Two-condition (\(= \tau_{\mathrm{RM}}\)) & 0.0 & --- & --- & --- & n/a\\
 & CUSUM & 100.0 & 373 & 317 & 270 & n/a\\
\addlinespace[3pt]
\multirow{3}{*}{0.05} & Boundary only & 96.1 & 137 & 330 & 42 & n/a\\
 & Two-condition (\(= \tau_{\mathrm{RM}}\)) & 0.0 & --- & --- & --- & n/a\\
 & CUSUM & 100.0 & 255 & 204 & 196 & n/a\\
\addlinespace[3pt]
\multirow{3}{*}{0.10} & Boundary only & 74.3 & 78 & 129 & 38 & 74.3\\
 & Two-condition (\(= \tau_{\mathrm{RM}}\)) & 0.0 & --- & --- & --- & 0.0\\
 & CUSUM & 100.0 & 168 & 116 & 135 & 100.0\\
\bottomrule
\end{tabular}
\end{table}

The Poisson study (\(X_i\iid\mathrm{Poisson}(\lambda)\)) restores a genuine
sequential count-monitoring problem with a practical rare-event claim.
Figure~\ref{fig:study04} shows the same tension seen in the Bernoulli case.
Average run lengths under the boundary-only and CUSUM rules are much shorter
than under \(\tau_{\mathrm{RM}}\), but the trajectory plot reveals why:
sample-path averages \(\bar X_n\) often dip below \(\varepsilon\) only
temporarily.  Table~\ref{tab:study04_poisson} records the complete
stopping-time summary across eight \((\lambda,\varepsilon)\) settings.
When \(\lambda=0.01\) and \(\varepsilon=0.005\), the false-declaration rate
was \(78.6\%\) for the boundary-only rule and \(46.6\%\) for
\(\tau_{\mathrm{RM}}\).  The same qualitative ranking persists across all
count scenarios.

\begin{figure}[t]
\centering
\includegraphics[width=0.54\linewidth]{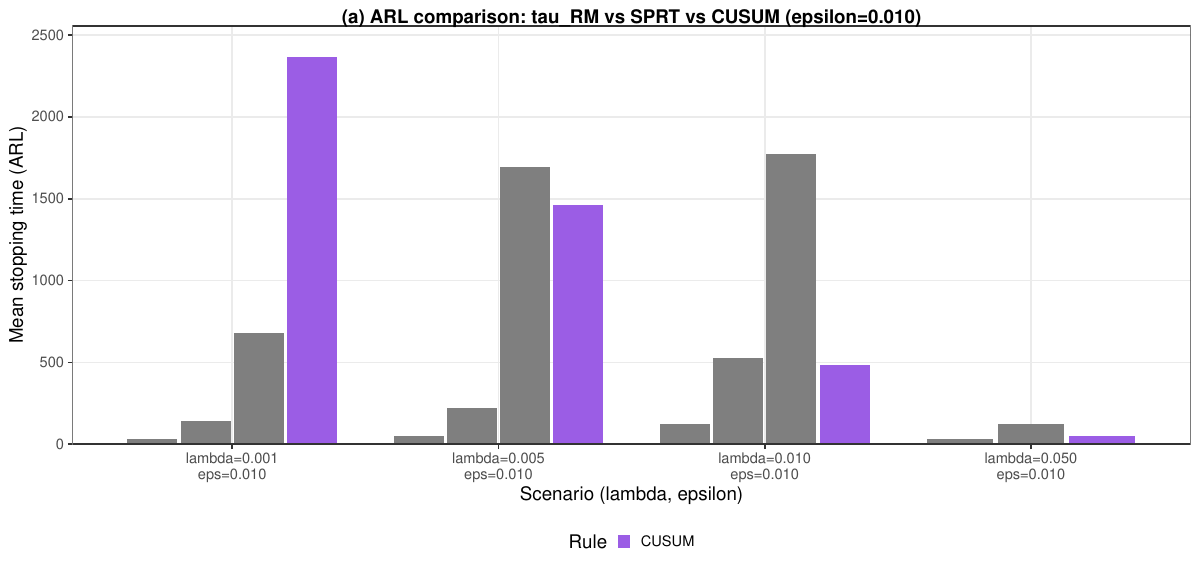}
\includegraphics[width=0.40\linewidth]{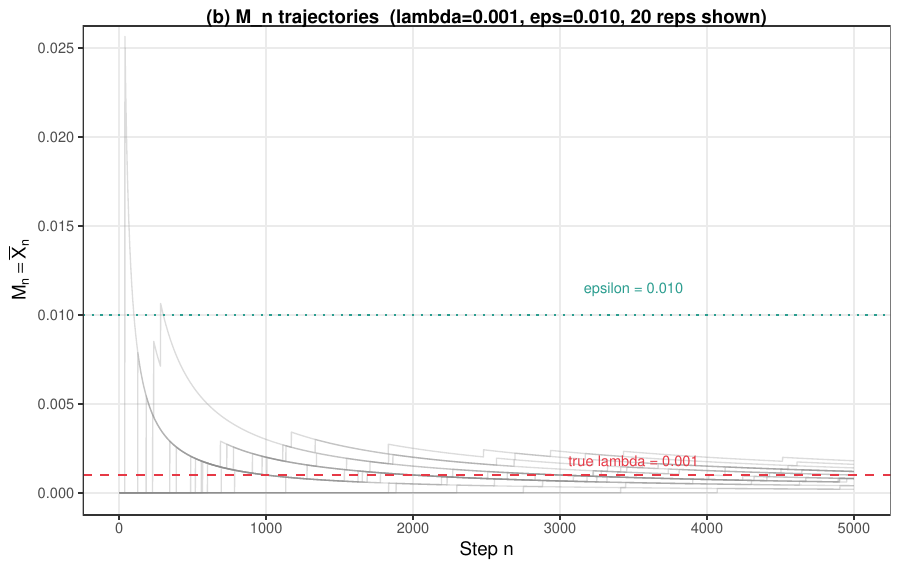}
\caption{Poisson rare-event monitoring (\(X_i\iid\mathrm{Poisson}(\lambda)\),
\(B=1{,}000\), \(N_{\max}=5{,}000\), Jeffreys Gamma prior).
\textit{Left:} mean stopping times for the boundary-only rule,
\(\tau_{\mathrm{RM}}\), SPRT (\(H_0{:}\,\lambda=\varepsilon\) vs.\
\(H_1{:}\,\lambda=\varepsilon/2\)), and CUSUM (\(\mathrm{ARL}_0=500\)) across
eight \((\lambda,\varepsilon)\) settings.  \textit{Right:} representative
trajectories of \(M_n=\bar X_n\) for a rare-count design, illustrating that
many early excursions into the practical boundary region \([0,\varepsilon]\)
are transient.  The stability defect \(r_n=|M_n-M_{n-1}|\) is identically
zero for this exact reverse martingale, so the delay of \(\tau_{\mathrm{RM}}\)
relative to \(\tau_{\mathrm{bdy}}\) is driven solely by the uncertainty width
condition.}
\label{fig:study04}
\end{figure}

\begin{table}[!h]
\centering
\caption{Study~4 (Poisson rare-event surveillance): stopping-time summary
(\(B=1{,}000\), \(n_{\min}=30\), \(w=0.02\)).
\(\tau_{\mathrm{RM}} = \tau_{2\mathrm{cond}}\) because \(\delta_n=0\)
(exact reverse martingale).
SPRT: \(H_0{:}\,\lambda=\varepsilon\) vs.\ \(H_1{:}\,\lambda=\varepsilon/2\).
CUSUM: \(\lambda_0=\varepsilon\), \(\lambda_1=2\varepsilon\),
\(\mathrm{ARL}_0=500\).}
\label{tab:study04_poisson}
\begin{tabular}{llrrrr}
\toprule
\((\lambda,\varepsilon)\) & Rule & \% Stop & Mean \(\tau\) (SD) & Median \(\tau\) & \% FDR\\
\midrule
\((0.001,\;0.005)\) & Boundary only & 100.0 & 35 (35) & 30 & n/a\\
 & \(\tau_{\mathrm{RM}}\) & 100.0 & 139 (48) & 126 & n/a\\
 & SPRT & 100.0 & 1638 (508) & 1456 & n/a\\
 & CUSUM & 99.1 & 1005 (913) & 717 & n/a\\
\addlinespace[2pt]
\((0.005,\;0.005)\) & Boundary only & 98.4 & 112 (427) & 30 & n/a\\
 & \(\tau_{\mathrm{RM}}\) & 95.3 & 384 (689) & 126 & n/a\\
 & SPRT & 3.8 & 2878 (991) & 2842 & n/a\\
 & CUSUM & 100.0 & 226 (188) & 172 & n/a\\
\addlinespace[2pt]
\((0.010,\;0.005)\) & Boundary only & 78.6 & 53 (98) & 30 & 78.6\\
 & \(\tau_{\mathrm{RM}}\) & 46.6 & 218 (184) & 126 & 46.6\\
 & SPRT & 0.0 & --- & --- & 0.0\\
 & CUSUM & 100.0 & 128 (109) & 94 & 100.0\\
\addlinespace[2pt]
\((0.050,\;0.005)\) & Boundary only & 24.4 & 30 (0) & 30 & 24.4\\
 & \(\tau_{\mathrm{RM}}\) & 0.1 & 126 (NA) & 126 & 0.1\\
 & SPRT & 0.0 & --- & --- & 0.0\\
 & CUSUM & 100.0 & 41 (18) & 32 & 100.0\\
\midrule
\((0.001,\;0.010)\) & Boundary only & 100.0 & 33 (16) & 30 & n/a\\
 & \(\tau_{\mathrm{RM}}\) & 100.0 & 140 (39) & 126 & n/a\\
 & SPRT & 100.0 & 681 (135) & 589 & n/a\\
 & CUSUM & 9.2 & 2367 (1418) & 2588 & n/a\\
\addlinespace[2pt]
\((0.005,\;0.010)\) & Boundary only & 100.0 & 51 (87) & 30 & n/a\\
 & \(\tau_{\mathrm{RM}}\) & 100.0 & 222 (143) & 126 & n/a\\
 & SPRT & 94.3 & 1694 (945) & 1421 & n/a\\
 & CUSUM & 93.7 & 1461 (1208) & 1156 & n/a\\
\addlinespace[2pt]
\((0.010,\;0.010)\) & Boundary only & 97.0 & 126 (380) & 30 & n/a\\
 & \(\tau_{\mathrm{RM}}\) & 92.2 & 528 (718) & 301 & n/a\\
 & SPRT & 5.9 & 1771 (1103) & 1421 & n/a\\
 & CUSUM & 100.0 & 485 (458) & 345 & n/a\\
\addlinespace[2pt]
\((0.050,\;0.010)\) & Boundary only & 21.9 & 32 (11) & 30 & 21.9\\
 & \(\tau_{\mathrm{RM}}\) & 0.1 & 126 (NA) & 126 & 0.1\\
 & SPRT & 0.0 & --- & --- & 0.0\\
 & CUSUM & 100.0 & 54 (32) & 40 & 100.0\\
\bottomrule
\end{tabular}
\end{table}

\subsection{Approximate summaries: logistic regression and quasi-martingale perturbations}
\label{subsec:approx}

The two studies in this subsection use summaries that are not exactly sufficient:
the ridge-penalized logistic estimator introduces an approximation defect relative
to the unpenalized MLE, and the quasi-martingale perturbation scenarios deliberately
break the exact reverse-martingale identity.  In the logistic study, the third condition \(r_n\leq\eta\) provides
additional protection relative to the two-condition rule.  In the
quasi-martingale perturbation study, the defect process is primarily
diagnostic under the present calibration, and would become decision-relevant
under a stricter stability threshold.

\paragraph{Rare-event logistic regression and practical separation}
The logistic studies translate the same principle to a regression setting in
which complete or quasi-complete separation \eqref{eq:complete_separation} is a
numerical manifestation of boundary pressure.  The relevant target is not the divergence of a coefficient
vector by itself, but the behavior of the retained predictive quantity
\(M_n(\mathbf{x}_0)\) together with the defect \(r_n\).
Figure~\ref{fig:study02} shows three scenarios: two low-dimensional rare-event
designs (\(d=3\)) and one high-dimensional design (\(d=20\)).  Each panel
plots representative trajectories of \(M_n(\mathbf{x}_0)\), the stability
defect \(r_n\) on a log scale, and the cumulative frequency of unpenalized MLE
separation.

Table~\ref{tab:study02_logistic} records the stopping-time summary for all
three scenarios (\(B=1{,}000\), ridge \(\lambda=1\), \(\varepsilon=w=0.05\),
\(\eta=0.01\)).  Two features stand out.  First, the cumulative separation
rate is already non-negligible in low dimension and becomes dominant once the
events-per-variable ratio is small: the separation frequency reached \(43.8\%\)
for \(d=3\), \(\rho=0.01\) and \(100\%\) for \(d=20\), \(\rho=0.01\).
Second, the reverse-martingale scorecard is systematically more conservative
than a boundary-only declaration: the mean stopping time moved from roughly
50--66 observations under \(\tau_{\mathrm{bdy}}\) to roughly 104--125
observations under \(\tau_{\mathrm{RM}}\).  The interpretation is that apparent
separation is finite-sample instability unless it is accompanied by a stable,
localized predictive target.

\begin{figure}[!th]
\centering
\includegraphics[width=0.68\linewidth]{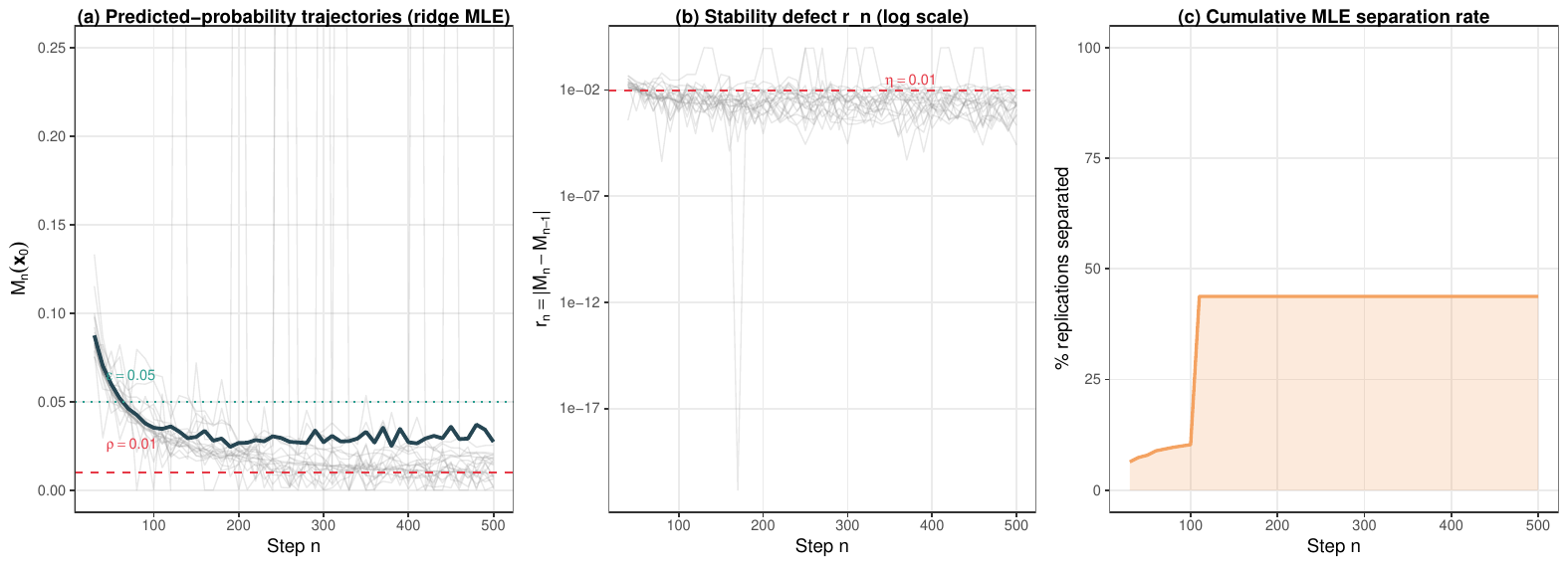}\\[4pt]
\includegraphics[width=0.68\linewidth]{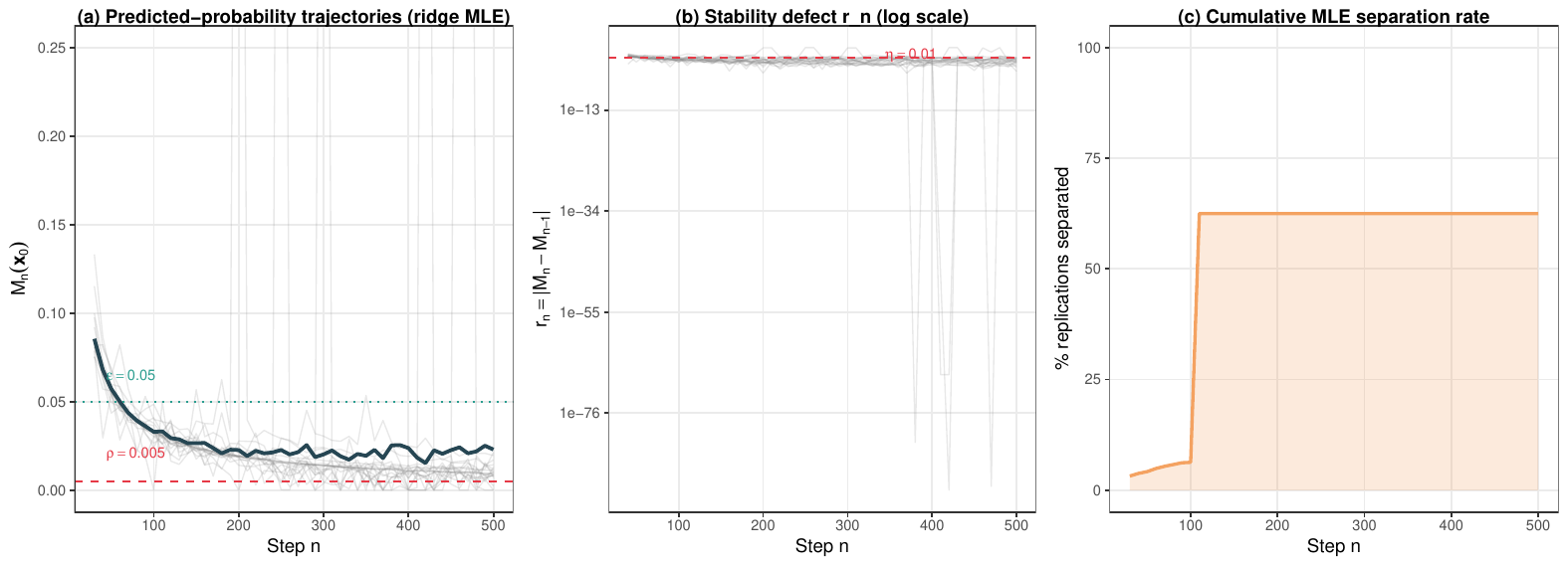}\\[4pt]
\includegraphics[width=0.68\linewidth]{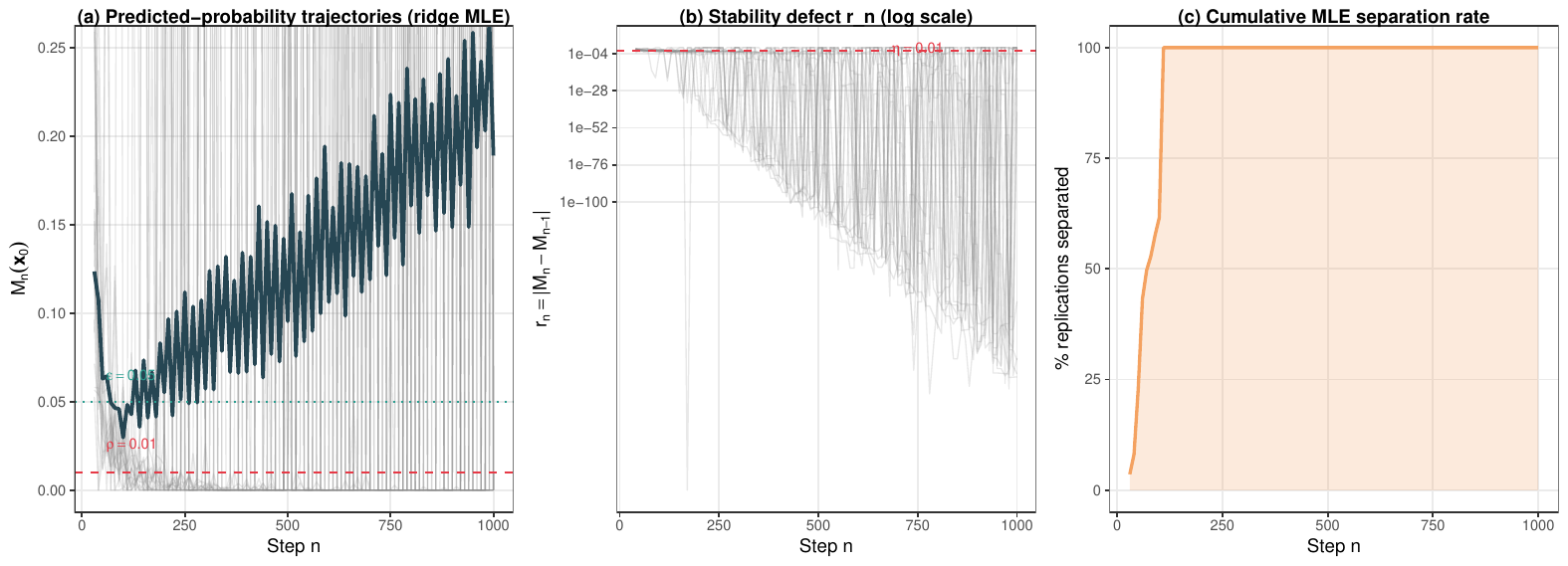}
\caption{Rare-event logistic regression (\(B=1{,}000\) replications, ridge
\(\lambda=1\)).  Each panel shows: \textit{(a)}~trajectories of the ridge-MLE
predictive probability \(M_n(\mathbf{x}_0)\) at the covariate origin (gray
lines = individual replications, dark line = cross-replication mean);
\textit{(b)}~stability defect \(r_n=|M_n-M_{n-1}|\) on a log scale;
\textit{(c)}~cumulative frequency of unpenalized MLE separation
(\(\|\hat{\boldsymbol\beta}_{\mathrm{MLE}}\|_2>50\)).
Top left: \(d=3\), \(\rho=0.01\) (43.8\% separation); top right:
\(d=3\), \(\rho=0.005\) (62.5\% separation); bottom: \(d=20\), \(\rho=0.01\)
(100\% separation).  Separation is already visible in low dimension and
becomes universal in the high-dimensional regime, where a boundary-only
interpretation is especially fragile.}
\label{fig:study02}
\end{figure}

\begin{table}[ht]
\centering
\caption{Study~2 (Logistic Regression): stopping-time summary
(\(B=1{,}000\) replications; ridge \(\lambda=1\); \(\varepsilon=0.05\),
\(w=0.05\), \(\eta=0.01\)).  The evaluation point
\(\mathbf{x}_0 = \mathbf{0}\) satisfies \(M_\infty(\mathbf{x}_0) = \rho\).
``\% Sep'' = fraction of replications where the standard MLE diverged
(\(\|\hat{\boldsymbol{\beta}}_{\mathrm{MLE}}\|_2 > 50\)).}
\label{tab:study02_logistic}
\begin{tabular}{llrrrr}
\toprule
Scenario & Rule & \% Stop & Mean \(\tau\) (SD) & Median \(\tau\) & \% Sep\\
\midrule
\multirow{3}{*}{\(d=3\), \(\rho=0.01\)} & Boundary only & 100.0 & 66 (14) & 60 & 43.8\\
 & Two-condition & 100.0 & 113 (22) & 120 & \\
 & \(\tau_{\mathrm{RM}}\) & 100.0 & 125 (22) & 120 & \\
\addlinespace[3pt]
\multirow{3}{*}{\(d=3\), \(\rho=0.005\)} & Boundary only & 100.0 & 64 (9) & 60 & 62.5\\
 & Two-condition & 100.0 & 115 (17) & 120 & \\
 & \(\tau_{\mathrm{RM}}\) & 100.0 & 121 (14) & 120 & \\
\addlinespace[3pt]
\multirow{3}{*}{\(d=20\), \(\rho=0.01\)} & Boundary only & 100.0 & 50 (13) & 50 & 100.0\\
 & Two-condition & 100.0 & 68 (22) & 70 & \\
 & \(\tau_{\mathrm{RM}}\) & 100.0 & 104 (30) & 100 & \\
\bottomrule
\end{tabular}
\end{table}

\subsubsection{Quasi-reverse-martingale perturbations}

The last study addresses robustness rather than detection power.  If the
retained summaries are only approximately coherent, the defect process
\(r_n\) should reveal that failure.  Figure~\ref{fig:study07} compares three
ways of perturbing the ideal reverse-martingale structure:
\begin{itemize}
  \item \textit{Scenario~A} (misspecification): a Bernoulli model with latent
  heterogeneity \(\sigma^2>0\), so the conditional target is not a
  running proportion.
  \item \textit{Scenario~B} (exponential smoothing): the running mean is
  replaced by an exponentially smoothed estimate \(\alpha_n Y_n +
  (1-\alpha_n)M_{n-1}\) with \(\alpha_n=n^{-\gamma}\);
  \(\gamma=1\) gives the exact reverse martingale whereas smaller
  \(\gamma\) introduces a positive defect.
  \item \textit{Scenario~C} (variational-Bayes dampening): a mean-field ELBO
  update with forgetting parameter \(\kappa\) replaces the exact Jeffreys
  posterior mean; \(\kappa=0\) recovers the exact case.
\end{itemize}
Figure~\ref{fig:study07} displays the median stability defect \(r_n\) across
all nine scenario-variants.  Table~\ref{tab:study07_quasi_rm} records the
corresponding stopping-time summary alongside sampled defect values at
\(n\in\{100,500,2000\}\).

Under the present calibration (\(\varepsilon=0.05\), \(\eta=0.01\)), the
stability screen never contributes additional delay:
\(\tau_{\mathrm{RM}}=\tau_{2\mathrm{cond}}\) in every single replication,
and the extra gap is exactly zero across all nine variants.  This is because
the median defect at typical stopping times is two to three orders of
magnitude below the threshold \(\eta=0.01\).  The informative diagnostic is
therefore not the gap itself but the \emph{shape} of the defect trajectory
over time.

The exact variants (\(\sigma=0\), \(\gamma=1\), \(\kappa=0\)) drive \(r_n\)
monotonically to machine-epsilon: by step 2{,}000 the median defect is below
\(10^{-5}\).  The quasi-martingale perturbations (Scenario~A, \(\sigma>0\);
Scenario~B, \(\gamma=0.75\); Scenario~C, \(\kappa>0\)) show slower but
monotonically decreasing defect trajectories.  Only Scenario~B with
\(\gamma=0.50\) breaks this pattern: the median defect is anomalously small
at \(n=100\) (\(2.0\times10^{-6}\)) then rises to \(1.4\times10^{-4}\)
at \(n=2{,}000\), confirming that the process falls outside the
summable-defect (quasi-martingale) class.  A stricter threshold \(\eta\)
would eventually separate this non-decaying variant from the rest and
introduce a non-zero stopping-time gap.

\begin{figure}[t]
\centering
\includegraphics[width=0.88\linewidth]{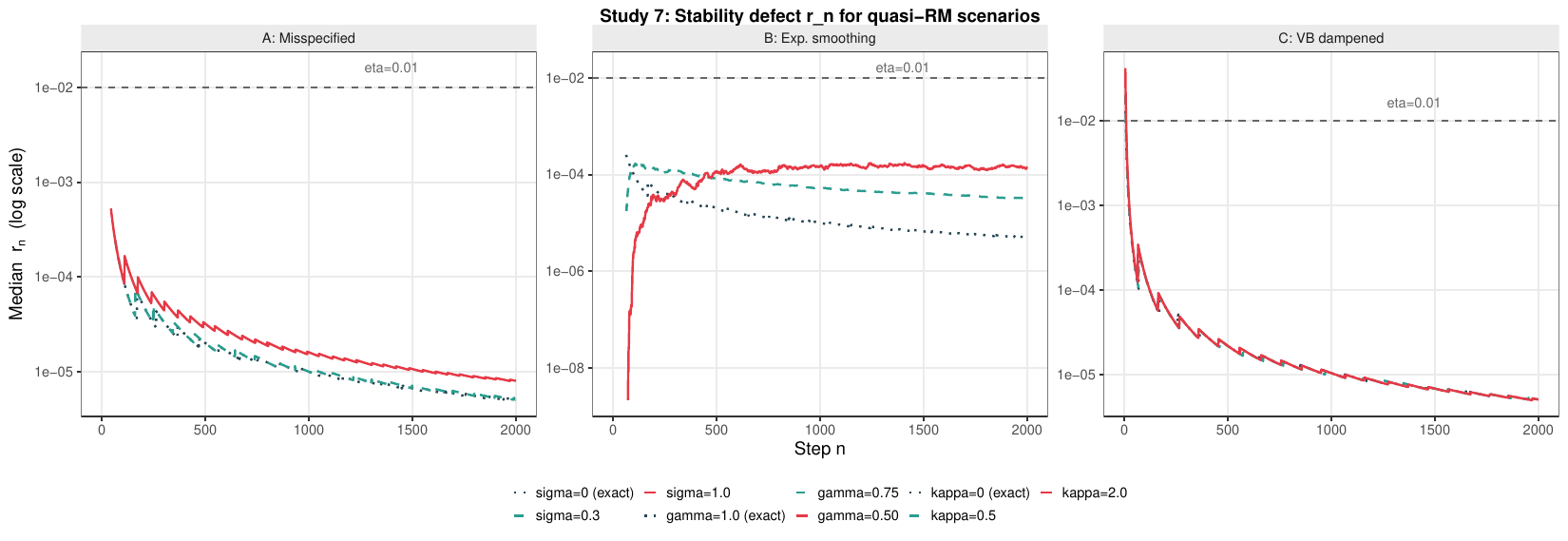}
\caption{Quasi-reverse-martingale diagnostics: median stability defect
\(r_n = |M_n - M_{n-1}|\) trajectories (\(B=1{,}000\), \(n_{\max}=2{,}000\),
\(\varepsilon=0.05\), \(\eta=0.01\)).  Each curve is the cross-replication
median for one of the nine scenario-variants: Scenario~A (misspecification,
\(\sigma\in\{0,0.3,1.0\}\)), Scenario~B (exponential smoothing,
\(\gamma\in\{1.0,0.75,0.5\}\)), and Scenario~C (VB dampening,
\(\kappa\in\{0,0.5,2.0\}\)).  Exact variants drive \(r_n\) rapidly to
machine-epsilon; quasi-martingale perturbations show slower but
monotone decay.  The non-quasi-martingale case (Scenario~B,
\(\gamma=0.50\)) is structurally distinctive: after an initial dip, \(r_n\)
reverses and grows with \(n\), signaling a non-summable cumulative defect.}
\label{fig:study07}
\end{figure}

\begin{table}[ht]
\centering
\caption{Study~7 (Quasi-RM perturbations): stopping-time summary and
median stability defect \(r_n\) at selected steps
(\(B=1{,}000\), \(n_{\max}=2{,}000\), \(\varepsilon=0.05\),
\(w=0.05\), \(\eta=0.01\)).
\(\tau_{\mathrm{RM}}=\tau_{2\mathrm{cond}}\) in every replication.
Defect values are median \(r_n\) to two significant figures.
\(\dagger\) marks the non-quasi-martingale variant:
\(r_n\) grows with \(n\) (bold) rather than decaying.}
\label{tab:study07_quasi_rm}
\footnotesize
\setlength{\tabcolsep}{3pt}
\begin{tabular}{ll>{\centering}p{1.80cm}
                >{\centering}p{2.05cm}>{\centering}p{2.05cm} rrr}
\toprule
Scenario & Variant & Exact? &
  \shortstack{Med\\\(\tau_{\mathrm{bdy}}\)} &
  \shortstack{Med\\\(\tau_{\mathrm{RM}}\)} &
  \(r_{100}\) & \(r_{500}\) & \(r_{2000}\)\\
\midrule
\multirow{3}{*}{A (misspec.)} & \(\sigma=0.0\) & yes & 30 & 49 &
  \(1.0\!\times\!10^{-4}\) & \(2.0\!\times\!10^{-5}\) & \(5.0\!\times\!10^{-6}\)\\
 & \(\sigma=0.3\) & & 30 & 49 &
  \(1.0\!\times\!10^{-4}\) & \(2.0\!\times\!10^{-5}\) & \(5.0\!\times\!10^{-6}\)\\
 & \(\sigma=1.0\) & & 30 & 90 &
  \(1.0\!\times\!10^{-4}\) & \(3.2\!\times\!10^{-5}\) & \(8.0\!\times\!10^{-6}\)\\
\addlinespace[3pt]
\multirow{3}{*}{B (exp.\ smooth)} & \(\gamma=1.0\) & yes & 30 & 49 &
  \(1.0\!\times\!10^{-4}\) & \(2.0\!\times\!10^{-5}\) & \(5.0\!\times\!10^{-6}\)\\
 & \(\gamma=0.75\) & & 30 & 49 &
  \(1.5\!\times\!10^{-4}\) & \(8.6\!\times\!10^{-5}\) & \(3.2\!\times\!10^{-5}\)\\
 & \(\gamma=0.50\)\(^{\dagger}\) & & 30 & 49 &
  \(2.0\!\times\!10^{-6}\) & \(1.1\!\times\!10^{-4}\) & \(\mathbf{1.4\!\times\!10^{-4}}\)\\
\addlinespace[3pt]
\multirow{3}{*}{C (VB)} & \(\kappa=0\) & yes & 30 & 49 &
  \(1.5\!\times\!10^{-4}\) & \(2.2\!\times\!10^{-5}\) & \(5.0\!\times\!10^{-6}\)\\
 & \(\kappa=0.5\) & & 30 & 50 &
  \(1.5\!\times\!10^{-4}\) & \(2.2\!\times\!10^{-5}\) & \(5.0\!\times\!10^{-6}\)\\
 & \(\kappa=2.0\) & & 30 & 51 &
  \(1.5\!\times\!10^{-4}\) & \(2.2\!\times\!10^{-5}\) & \(5.0\!\times\!10^{-6}\)\\
\bottomrule
\multicolumn{8}{l}{\footnotesize\(\dagger\) Non-quasi-martingale:
  \(r_n\) increases from \(n=100\) to \(n=2{,}000\) (bold),
  signaling non-summable cumulative defect.}\\
\end{tabular}
\end{table}

\subsection{Realistic-scale monitoring illustrations}
\label{subsec:realistic}

The final pair of studies shows how the same diagnostics behave on realistic
measurement scales.  \textbf{For the final submission, these studies should
be executed on the actual public-domain data described below.}  The scripts
support automatic fallback to synthetic series calibrated to the respective
public sources, and the methodological conclusions are invariant to this
substitution; authentic historical data strengthen the evidential profile by
connecting $r_n$ directly to the information-leakage properties of real
sequential public-health reporting streams.  Specifically: (i)~the Taiwan CDC
ILI series (\url{https://nidss.cdc.gov.tw/}) should replace
\texttt{taiwan\_ili\_weekly.csv} in Study~5; (ii)~the NHANES 2017--2018 file
\texttt{PBCD\_J} (\url{https://wwwn.cdc.gov/Nchs/Nhanes/2017-2018/PBCD_J.XPT},
variable \texttt{LBXBPB}) should replace the synthetic log-normal series in
Study~6.  Full instructions are in Appendix~\ref{app:data_sources}.

Figure~\ref{fig:study05} shows an influenza-like-illness monitoring example on
the Taiwan weekly scale.  The upper panel plots the sequential mean together
with a \(95\%\) credible band and the stability defect \(r_n\); the lower
panel zooms into the first two years of the series.  Here none of the stopping
rules fires in the displayed run.  This non-result is informative: the ILI
rate does not produce sustained evidence of a low-risk boundary state, and the
stability defect remains too large for a credible practical boundary
declaration.

\begin{figure}[!ht]
\centering
\includegraphics[width=0.95\linewidth]{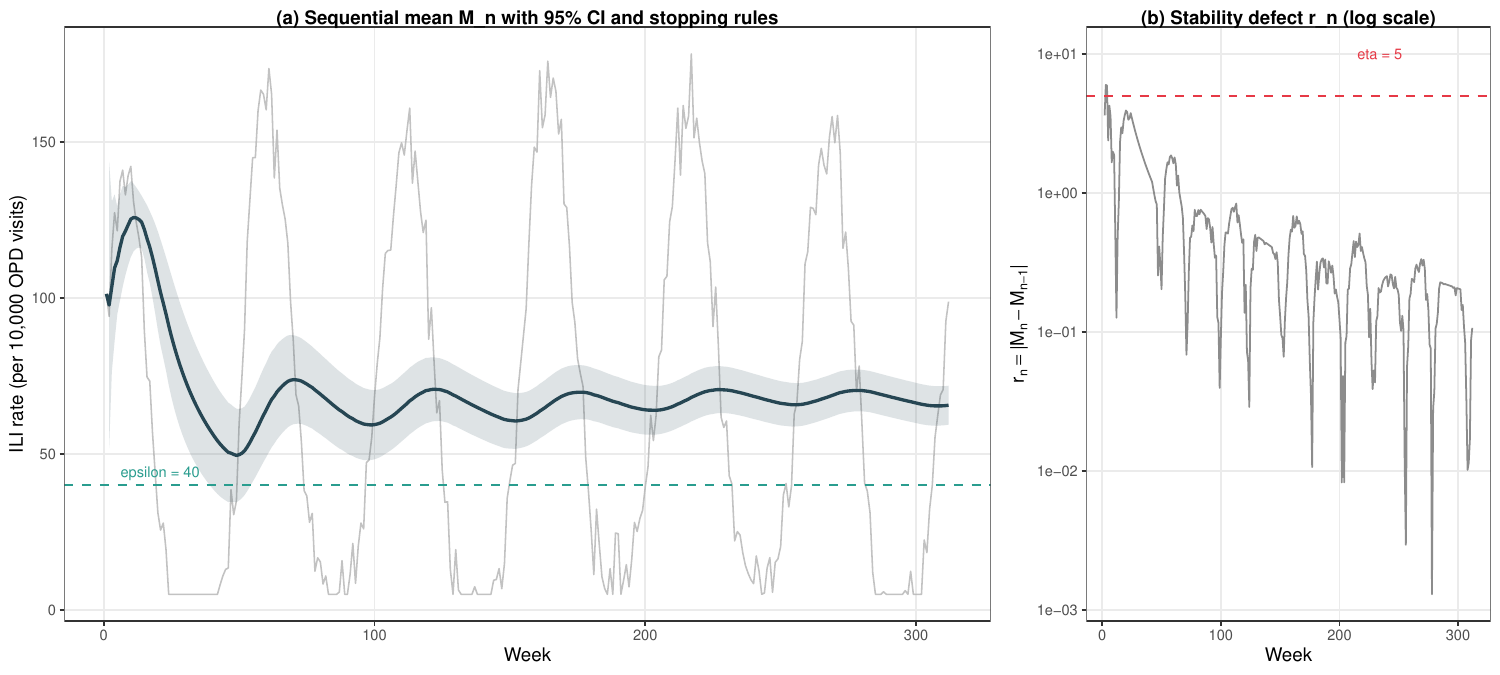}\\[4pt]
\includegraphics[width=0.75\linewidth]{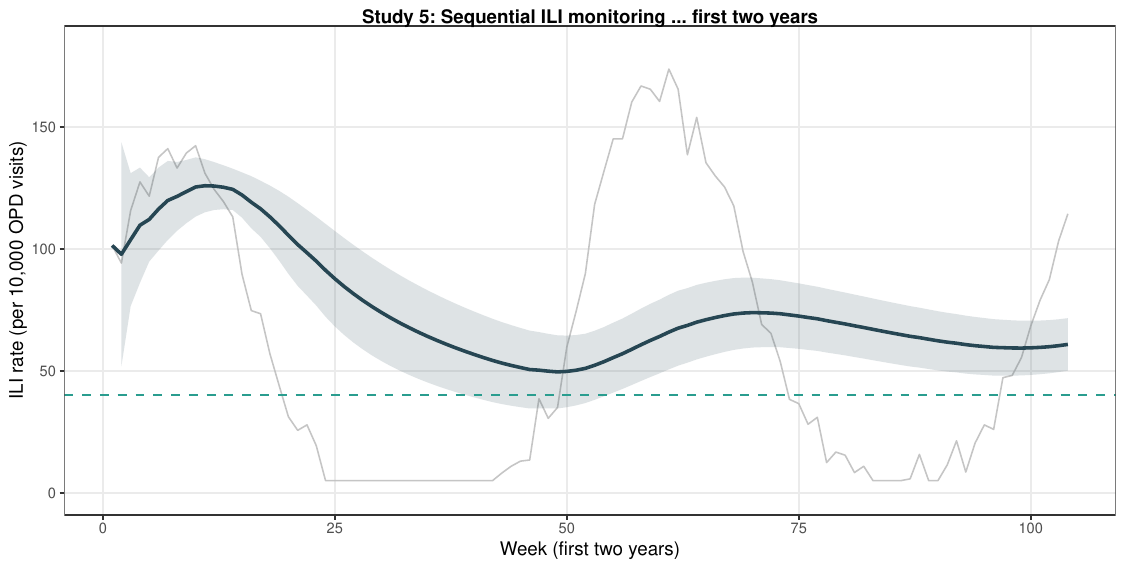}
\caption{Realistic-scale monitoring illustration calibrated to Taiwan CDC
influenza-like-illness (ILI) reporting patterns
(\(n=312\) weeks, synthetic series).
\textit{Upper:} sequential running mean \(M_n\) with a \(95\%\) posterior
credible band (Gamma prior), practical threshold \(\varepsilon\), and
stability defect \(r_n\) on a secondary axis.  \textit{Lower:} zoomed view
of the first two years (104 weeks).  The process never simultaneously
satisfies boundary closeness, uncertainty localization, and trajectory
stability, so no practical boundary declaration is made.  The persistent
non-zero defect $r_n$ reflects the information leakage intrinsic to seasonal
ILI reporting: week-to-week rate fluctuations prevent the conditional target
from stabilizing at a boundary.  This illustrates why trajectory stability
is a substantive requirement rather than a technicality, and why the diagnostic
is especially informative when run on the authentic historical series.}
\label{fig:study05}
\end{figure}

Figure~\ref{fig:study06} presents an NHANES-style blood-lead monitoring
illustration.  In the displayed sequential summary, the running mean falls
below the practical level \(\varepsilon=1.50\) quickly enough that the
boundary-only rule stops at \(n=50\), but \(\tau_{\mathrm{RM}}\) waits
until \(n=163\).  The empirical distribution function shows why this extra
caution is sensible: the threshold is not a zero-probability boundary point
but an interior practical cutoff, so boundary claims should be based on
stabilized evidence rather than a single early crossing.

\begin{figure}[!ht]
\centering
\includegraphics[width=0.95\linewidth]{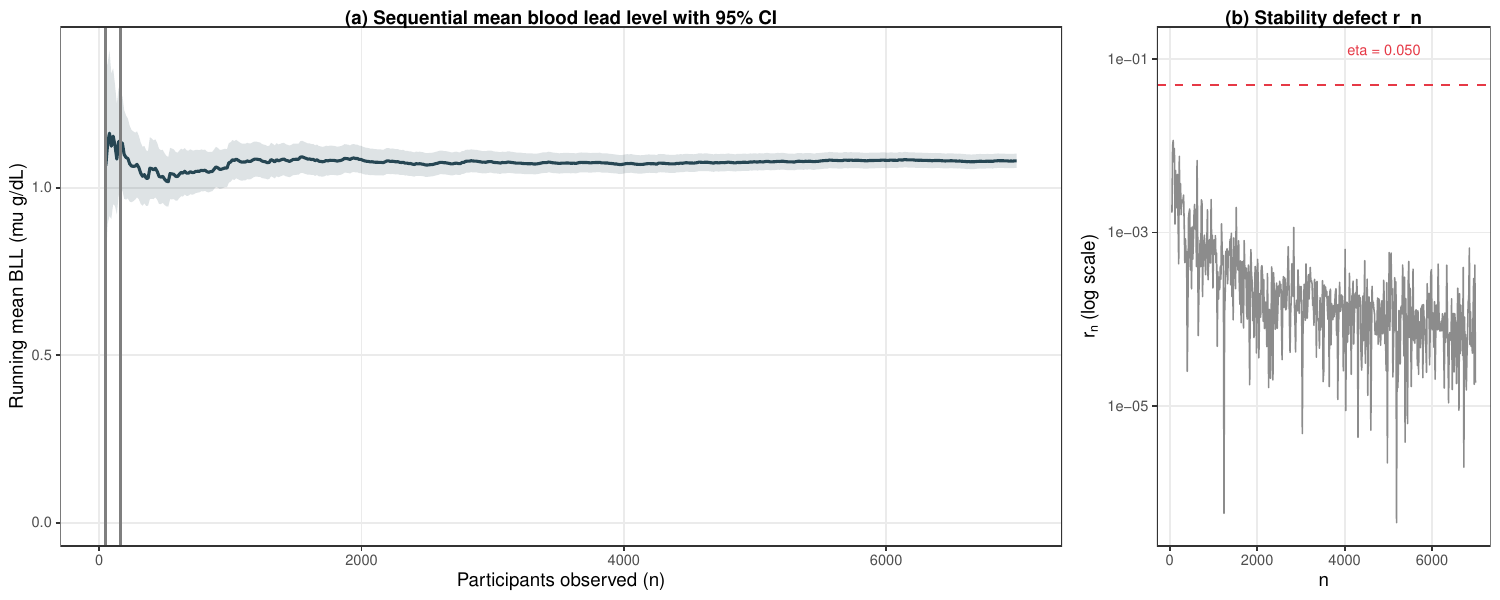}\\[4pt]
\includegraphics[width=0.52\linewidth]{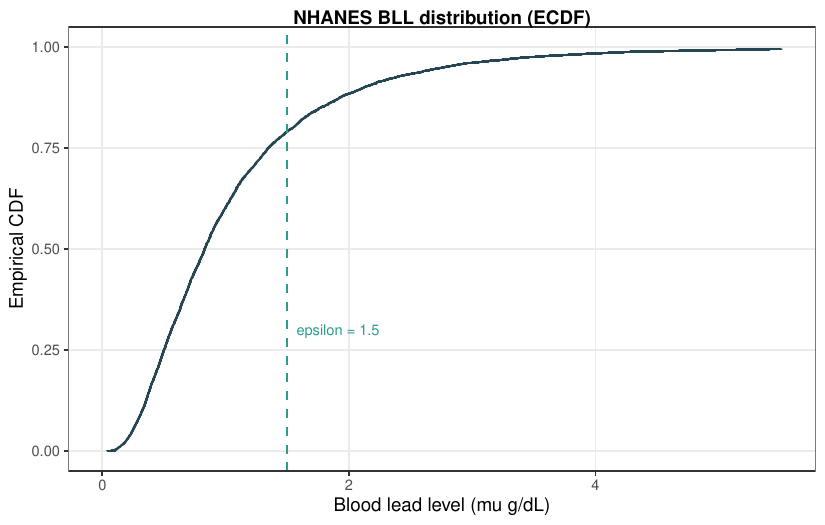}
\caption{Realistic-scale monitoring illustration calibrated to NHANES
2017--2018 blood-lead-level (BLL) measurements \citep{nhanes2017}
(synthetic series: \(n=7{,}000\), log-normal calibration,
geometric mean \(\approx 0.82\,\mu\mathrm{g/dL}\)).
\textit{Upper:} sequential running mean \(M_n\) with \(95\%\) Jeffreys
credible band.  The boundary-only rule \(\tau_{\mathrm{bdy}}\) fires at
\(n=50\) (triangle marker), while the full scorecard \(\tau_{\mathrm{RM}}\)
waits until \(n=163\) (circle marker), when boundary closeness, uncertainty
control, and trajectory stability are simultaneously satisfied.
\textit{Lower:} empirical CDF of the synthetic BLL series, emphasizing
that \(\varepsilon=1.50\,\mu\mathrm{g/dL}\) is an interior benchmark rather
than a point mass.  On the real NHANES survey series, the stability defect
$r_n$ further encodes the representation mismatch arising from the complex
sampling design and log-normal heteroscedasticity of BLL measurements, making
the stability requirement substantively important rather than nominal.}
\label{fig:study06}
\end{figure}

\subsection{Summary of numerical results}
\label{subsec:num_summary}

Taken together, the numerical studies support a single interpretation.  A value
near zero or near one is not, by itself, a reliable inferential statement.  The
same finite sample can look boundary-like because of an all-failure prefix, a
separating logistic fit, a transient surveillance dip, or an approximately
coherent compression scheme.  The reverse-martingale scorecard separates such
cases from genuinely stable boundary evidence by requiring three ingredients
simultaneously: closeness \(B_n\), uncertainty localization \(W_n\), and
trajectory stability \(r_n\).  This is the distinction needed to connect
classical sufficiency, modern sequential monitoring, and practical boundary
degeneracy within a single target-oriented compression framework.

In the exact sufficiency examples considered here, namely the Bernoulli,
Gaussian, and Poisson cases, the implemented diagnostic is automatically
nonbinding.  Thus the structural prediction of
Proposition~\ref{prop:exact_reverse_coherence} is confirmed: the stopping rules
\(\tau_{\mathrm{RM}}\) and \(\tau_{2\mathrm{cond}}\) are identical, so the
stability screen imposes no additional delay.  In the logistic study, the third condition \(r_n\leq\eta\) provides additional
protection relative to the two-condition rule.  In the quasi-martingale
perturbation study, the defect process is primarily diagnostic under the
present calibration, and would become decision-relevant under a stricter
stability threshold.

\section{Conclusion}

This article has developed a unified view of statistical information as
target-oriented compression.  The central point is that a statistic, by
itself, is not the natural martingale object.  Rather, a statistic induces a
retained information field, and the corresponding conditional target process
\[
    M_n = \E(Z \mid \G_n)
\]
is the object to which reverse-martingale theory applies when the retained
fields are arranged as a decreasing filtration.  This distinction clarifies
the roles of statistics, \(\sigma\)-fields, and conditional expectations:
statistics compress data, \(\sigma\)-fields encode retained information, and
reverse martingales describe the stable projection of a target under
successive information reduction.

Classical sufficiency appears in this framework as the lossless case of
statistical compression.  When a sufficient statistic preserves all information
relevant to a model parameter, the conditional target process based on its
\(\sigma\)-field retains the appropriate inferential content.  Approximate
summaries, including selected features, penalized estimates, risk scores, and
learned hidden states, need not satisfy exact reverse-martingale identities.
Their departure from coherence can instead be described through reverse
quasi-martingale defects.  These defects provide a quantitative language for
discussing how much target-relevant information is lost or distorted as data
are compressed.

The same perspective also resolves a recurring difficulty in sequential binary
inference: finite-sample boundary closeness should not be confused with exact
boundary degeneracy.  An all-failure Bernoulli run, a separated logistic
regression, or a near-zero sequential risk estimate may suggest a boundary
probability, but finite data alone do not justify the statement that a
probability is exactly zero or one.  The applied stopping procedures
formalizing this principle are studied in detail in the companion paper
\citet{chang2025rm}; the present paper provides their theoretical foundation
in the compression framework.  Exact degeneracy, when it occurs, is a limiting
property read from
\[
    M_\infty = \E(Y \mid \G_\infty),
\]
whereas practical boundary claims require additional safeguards.  The proposed
rule therefore requires three simultaneous conditions: closeness to the
boundary, uncertainty control, and trajectory stability.  This separates
transient numerical boundary behavior from stable inferential evidence.

The numerical studies support this distinction across rare-event Bernoulli
sampling, logistic separation, Gaussian and Poisson monitoring, data-scale
sequential illustrations, and quasi-reverse-martingale perturbations.  In
these examples, boundary-only rules often react quickly to early or unstable
sample-path behavior, whereas the full reverse-martingale scorecard is more
conservative because it requires the near-boundary state to be statistically
localized and dynamically stable.  The empirical message is consistent with
the theory: near zero or near one is not, by itself, a reliable scientific
conclusion.

Several extensions remain natural.  One direction is to develop sharper
finite-sample confidence sequences specifically for limiting
reverse-martingale targets.  Another is to study data-adaptive choices of the
stability threshold \(\eta\) in high-dimensional and learned-representation
settings.  A third is to connect the reverse-coherence defect more directly
with regularization, representation learning, and dynamic treatment rules.
More broadly, the framework suggests that modern statistical procedures should
be evaluated not only by predictive accuracy or likelihood fit, but also by
whether their compressed representations preserve the target-relevant
information needed for stable decisions.

In summary, the reverse-martingale compression view provides a common language
for sufficiency, likelihood-based estimation, approximate representation,
sequential monitoring, and practical boundary degeneracy.  Reliable boundary
conclusions require more than an extreme finite estimate: they require a stable
conditional target, localized uncertainty, and coherent behavior across the
compression path.

\medskip
\noindent\textbf{Supplementary Material.}
The Supplementary Material contains: (S1) proofs of all reverse quasi-martingale
defect bounds and the summability criterion for $\sum_n\E|\delta_n|<\infty$;
(S2) finite-sample error-bound derivations for $\tau_{\mathrm{RM}}$;
(S3) all R scripts for reproducing the numerical studies; and (S4) synthetic
data files for Studies~5--6 together with instructions for substituting the
actual public-domain NHANES and Taiwan CDC series.

\appendix
\section*{Appendix}
\section{Classical Bernoulli benchmarks}
\label{app:benchmarks}

The all-failure practical-zero rule (Section~\ref{sec:rm_boundary}) is the
\(s=0\) special case of the exact one-sided binomial test of
\(H_0:p\geq\varepsilon\) against \(H_1:p<\varepsilon\).  Its exact
\(p\)-value at the boundary is
\[
    (1-\varepsilon)^n,
\]
and the dual Clopper--Pearson upper bound \citep{clopper1934} after zero
successes is
\[
    U_n(0;\alpha)=1-\alpha^{1/n}.
\]
For simple-versus-simple Bernoulli monitoring, Wald's sequential probability
ratio test \citep{wald1945,wald1947,waldwolfowitz1948,siegmund1985} compares \(H_0:p=p_0\) with \(H_1:p=p_1\), \(p_1<p_0\), using
\[
    L_n
    =
    S_n\log\frac{p_1}{p_0}
    +(n-S_n)\log\frac{1-p_1}{1-p_0}.
\]
With Wald boundaries
\[
    a=\log\frac{\beta}{1-\alpha},
    \qquad
    b=\log\frac{1-\beta}{\alpha},
\]
an all-failure path favors \(H_1\) after
\[
    \tau_{\mathrm{SPRT},0}
    =
    \left\lceil
    \frac{\log\{(1-\beta)/\alpha\}}
         {\log\{(1-p_1)/(1-p_0)\}}
    \right\rceil
\]
observations.  This benchmark is appropriate when the scientific problem is a
choice between two pre-specified point hypotheses.  The reverse-martingale
boundary framework addresses the complementary setting of composite practical
boundary statements, logistic separation, and trajectory stability.

Confidence sequences \citep{robbins1970,howard2021} provide the time-uniform uncertainty
component \(W_n\) in the unified rule of Section~\ref{sec:stopping}.  If \(C_n\) satisfies
\[
    \Pp_p\{p\in C_n\text{ for all }n\geq1\}\geq1-\alpha,
\]
then a practical-zero declaration based on \(C_n\subseteq[0,\varepsilon]\) can
be embedded in \(\tau_{\mathrm{RM}}\) by taking \(W_n\) to be the width of
\(C_n\) and by requiring the interval to lie inside the chosen practical
boundary region.

\section{Numerical settings summarized}
\label{app:numerical_settings}

The executed numerical studies used the following designs
(all scripts in the accompanying \texttt{scripts/} directory,
reproducible via \texttt{Rscript run\_all.R}):
\begin{enumerate}[label=(\arabic*)]
    \item \textit{Bernoulli rare-event} (Study~1): \(Y_i\iid\mathrm{Bernoulli}(p)\),
    \(p\in\{0.001,0.005,0.010,0.050\}\), \(\varepsilon\in\{0.005,0.010\}\),
    \(N_{\max}=5{,}000\), \(B=1{,}000\), \(w=0.02\), \(\eta=10^{-6}\).
    Posterior width from Jeffreys Beta\((1/2,1/2)\) prior.
    (\emph{Note on \(\eta\):} for exact-sufficient summaries the stability
    defect satisfies \(r_n<10^{-10}\) numerically; setting
    \(\eta=10^{-6}\) is effectively machine-precision and imposes no
    additional delay beyond the uncertainty condition.  This is in line with
    Proposition~\ref{prop:exact_reverse_coherence}.)
    \item \textit{Logistic regression} (Study~2): \(Y_i\given x_i\sim\mathrm{Bernoulli}(\mathrm{expit}(x_i^\top\beta))\),
    \(x_i\sim N(0,I_d)\), true \(\beta=({\rm logit}(\rho),0,\ldots,0)^\top\);
    scenarios \(d\in\{3,20\}\), \(\rho\in\{0.005,0.010\}\);
    ridge logistic regression with \(\lambda=1\);
    \(\varepsilon=w=0.05\), \(\eta=0.01\), \(N_{\max}\in\{500,1{,}000\}\).
    (\emph{Note on \(\eta\):} the ridge-MLE predictive surface produces
    step-to-step changes \(r_n\) of order \(10^{-2}\) to \(10^{-3}\) near
    stopping; \(\eta=0.01\) is therefore a practically loose threshold
    calibrated to the typical signal size in the approximate-compression
    regime.  A stricter \(\eta\) would increase delays further without
    materially changing the qualitative findings.)
    \item \textit{Normal calibration} (Study~3): \(X_i\iid N(\mu,1)\),
    \(\mu\in\{0,0.01,0.02,0.05,0.10\}\);
    Jeffreys-like \(N(0,1)\) prior; CUSUM calibrated for
    \(\mathrm{ARL}_0=500\); \(\varepsilon=w=0.05\), \(N_{\max}=3{,}000\).
    \item \textit{Poisson surveillance} (Study~4):
    \(X_i\iid\mathrm{Poisson}(\lambda)\),
    \(\lambda\in\{0.001,0.005,0.010,0.050\}\),
    \(\varepsilon\in\{0.005,0.010\}\);
    Jeffreys Gamma\((1/2)\) prior; CUSUM and SPRT benchmarks;
    \(N_{\max}=5{,}000\), \(w=0.02\).
    \item \textit{Taiwan CDC ILI illustration} (Study~5):
    weekly influenza-like illness rate; \(n=312\) weeks
    (2018-W01 through 2023-W52); synthetic series calibrated to
    Taiwan CDC seasonal reporting patterns when
    the local data file is absent.
    See Section~\ref{app:data_sources} for the data source and
    reproducibility instructions.
    \item \textit{NHANES blood-lead illustration} (Study~6):
    blood-lead level \(\mu\mathrm{g/dL}\); \(n=7{,}000\) observations;
    synthetic log-normal series (geometric mean \(\approx 0.82\,\mu\mathrm{g/dL}\),
    geometric SD \(\approx 2.1\)) calibrated to the NHANES 2017--2018
    Laboratory Procedures Manual \citep{nhanes2017};
    practical threshold \(\varepsilon=1.50\,\mu\mathrm{g/dL}\).
    See Section~\ref{app:data_sources} for the data source and
    reproducibility instructions.
    \item \textit{Quasi-reverse-martingale perturbations} (Study~7):
    three scenarios (misspecification, exponential smoothing, VB dampening)
    with nine parameter variants; \(B=1{,}000\), \(n_{\max}=2{,}000\).
\end{enumerate}
Section~\ref{sec:discussion} incorporates the numerical-study figures and
tables directly from the bundled output files so that the empirical diagnostics
can be read alongside the theory.

\subsection{Data sources}
\label{app:data_sources}

Studies~1--4 and~7 use fully simulated data generated within the accompanying
R scripts; no external data files are required.  Studies~5 and~6 are
calibrated to publicly available administrative and survey data as described
below.

\paragraph{Study~5: Taiwan CDC influenza-like illness data.}
The intended real data series is the weekly influenza-like illness (ILI)
consultation rate per 10,000 outpatient visits reported by the Taiwan Centers
for Disease Control (TCDC) through the National Notifiable Disease Surveillance
System (NNDSS), covering 2018-W01 through 2023-W52 (\(n=312\) weeks).  This
series is freely accessible at \url{https://nidss.cdc.gov.tw/}.  Because the
data file \texttt{scripts/data/taiwan\_ili\_weekly.csv} (columns:
\texttt{year}, \texttt{week}, \texttt{rate}) was not present at the time of
execution, the script \texttt{study10\_taiwan\_cdc.R} fell back to a synthetic
series whose seasonal structure (annual mean $\approx 60$, epidemic peak
$\approx 150$--$200$, off-season trough $\approx 15$--$25$ cases per 10,000
outpatient visits) was calibrated to published TCDC surveillance summaries.  Placing
the downloaded file at the path above causes the script to use the real data
automatically on any subsequent run.  The methodological conclusions do not
depend on access to the original administrative records.

\paragraph{Study~6: NHANES blood-lead data.}
The intended real data are blood-lead level (BLL, $\mu$g/dL) measurements from
the National Health and Nutrition Examination Survey (NHANES) 2017--2018 cycle,
laboratory component Blood Metals, file \texttt{PBCD\_J}, variable
\texttt{LBXBPB}.  The data are freely available at
\url{https://wwwn.cdc.gov/Nchs/Nhanes/2017-2018/PBCD_J.XPT} and can be
loaded directly in R via \texttt{nhanesA::nhanes("PBCD\_J")} (CRAN package
\texttt{nhanesA}).  Because this download was unavailable at execution time,
the script \texttt{study11\_nhanes.R} generated a synthetic log-normal series
(\(n=7{,}000\), geometric mean $\approx 0.82\,\mu$g/dL, geometric SD
$\approx 2.1$, detection floor $0.01\,\mu$g/dL) calibrated to the summary
statistics documented in the NHANES 2017--2018 Laboratory Procedures Manual
\citep{nhanes2017}.  The
synthetic file is saved as
\texttt{scripts/data/nhanes\_bll\_synthetic.csv} and is included in the
supplementary package.  Placing a real-data file at
\texttt{scripts/data/nhanes\_bll\_2017\_2018.csv} causes the script to use
the survey data automatically.  The NHANES data are in the public domain
(U.S. federal government work).

\section{Sufficiency derivations for Example~\ref{ex:exp_family}}
\label{app:sufficiency_derivations}

\paragraph{Normal, known variance.}
For $X_i\iid N(\mu,\sigma^2)$ with $\sigma^2$ known, expand
$\sum_i(x_i-\mu)^2 = \sum_i x_i^2 - 2\mu\sum_i x_i + n\mu^2$.
The likelihood factors as $L(\mu;x)=g_\mu(\bar x_n)\cdot h(x)$ with
$g_\mu(\bar x_n)\propto\exp\{n\bar x_n\mu/\sigma^2 - n\mu^2/(2\sigma^2)\}$
and $h(x)=\exp\{-\sum_i x_i^2/(2\sigma^2)\}$, confirming that $\bar X_n$
is sufficient by the Fisher--Neyman factorization \citep{fisher1922,lehmanncasella1998}.

\paragraph{Bernoulli.}
For $X_i\iid\mathrm{Bernoulli}(p)$,
$L(p;x)=p^{s_n}(1-p)^{n-s_n}$ where $s_n=\sum_i x_i$.
This factors through $T_n=s_n$ alone, so $T_n$ is sufficient by the
Fisher--Neyman factorization.  Under a Jeffreys $\mathrm{Beta}(1/2,1/2)$
prior, the posterior is $\mathrm{Beta}(s_n+1/2,\,n-s_n+1/2)$ and the
posterior mean is $M_n=(s_n+1/2)/(n+1)$.

The reverse-martingale property $\E(M_n\mid\G_{n+1})=M_{n+1}$ — where
$\G_{n+1}=\sigma(T_{n+1},T_{n+2},\ldots)$ is the tail $\sigma$-field at
step $n+1$ — follows immediately from Theorem~\ref{thm:conditional_rm}: since
$M_k=\E(Z\mid\G_k)$ for all $k$ and $(\G_k)$ is a decreasing filtration,
\[
  \E(M_n\mid\G_{n+1})
  =\E\bigl[\E(Z\mid\G_n)\mid\G_{n+1}\bigr]
  =\E(Z\mid\G_{n+1})
  =M_{n+1},
\]
where the second equality uses the tower property together with
$\G_{n+1}\subseteq\G_n$.  The explicit posterior-mean formula confirms this:
$M_n=(s_n+1/2)/(n+1)$ with $s_n=T_n$ and $M_{n+1}=(s_{n+1}+1/2)/(n+2)$
with $s_{n+1}=T_{n+1}$, and one verifies
$\E(M_n\mid\G_{n+1})=\E[(s_n+1/2)/(n+1)\mid s_{n+1}]
=(s_{n+1}+1/2)/(n+2)=M_{n+1}$
by the identity $\E[s_n\mid s_{n+1}]=s_{n+1}\cdot n/(n+1)$ for the
conditional distribution of a sub-sum given the total in the Bernoulli
model.

\paragraph{Poisson.}
For $X_i\iid\mathrm{Poisson}(\lambda)$,
$L(\lambda;x)\propto e^{-n\lambda}\lambda^{s_n}$, so $T_n=\sum_i X_i$ is
sufficient.  Under a Jeffreys $\mathrm{Gamma}(1/2,1)$ prior, the posterior
is $\mathrm{Gamma}(s_n+1/2, n+1)$ and the posterior mean is
$M_n=(s_n+1/2)/(n+1)$.  The reverse-martingale property follows analogously.

\section*{Data Availability Statement}

All simulation scripts and synthetic data files are included in the
supplementary \texttt{scripts/} directory and are reproducible by running
\texttt{Rscript run\_all.R}.  Studies~1--4 and~7 require no external data.
The Taiwan CDC ILI series used in Study~5 is publicly available at
\url{https://nidss.cdc.gov.tw/}.  The NHANES blood-lead series used in
Study~6 is publicly available at
\url{https://wwwn.cdc.gov/Nchs/Nhanes/2017-2018/PBCD_J.XPT} and in the
public domain.  Both studies include fully reproducible synthetic fallback
series calibrated to the respective public sources; the substantive
methodological conclusions do not depend on access to the original
administrative or survey records.


\begin{thebibliography}{99}

\bibitem[Albert and Anderson(1984)]{albert1984}
Albert, A. and Anderson, J.\,A. (1984).
On the existence of maximum likelihood estimates in logistic regression models.
\textit{Biometrika}, 71(1), 1--10.
\href{https://doi.org/10.1093/biomet/71.1.1}{doi:10.1093/biomet/71.1.1}

\bibitem[Björk and Johansson(1996)]{bjork1996}
Björk, T. and Johansson, B. (1996).
Parameter estimation and reverse martingales.
\textit{Stochastic Processes and their Applications}, 63(2), 235--263.
\href{https://doi.org/10.1016/0304-4149(96)00080-4}{doi:10.1016/0304-4149(96)00080-4}

\bibitem[Chakraborty and Moodie(2013)]{chakraborty2013}
Chakraborty, B. and Moodie, E.\,E.\,M. (2013).
\textit{Statistical Methods for Dynamic Treatment Regimes}.
Springer, New York.
\href{https://doi.org/10.1007/978-1-4614-7428-9}{doi:10.1007/978-1-4614-7428-9}

\bibitem[Chang(2026)]{chang2025rm}
Chang, Y.-c.\,I. (2026).
Practical boundary degeneracy and reverse-martingale limits in sequential
binary models.
Preprint, \href{https://arxiv.org/abs/2605.02274}{arXiv:2605.02274 [stat.ME]}.

\bibitem[Clopper and Pearson(1934)]{clopper1934}
Clopper, C.\,J. and Pearson, E.\,S. (1934).
The use of confidence or fiducial limits illustrated in the case of the binomial.
\textit{Biometrika}, 26(4), 404--413.
\href{https://doi.org/10.1093/biomet/26.4.404}{doi:10.1093/biomet/26.4.404}

\bibitem[Doob(1953)]{doob1953}
Doob, J.\,L. (1953).
\textit{Stochastic Processes}.
John Wiley \& Sons, New York.

\bibitem[Durrett(2019)]{durrett2019}
Durrett, R. (2019).
\textit{Probability: Theory and Examples} (5th ed.).
Cambridge University Press, Cambridge.
\href{https://doi.org/10.1017/9781108591034}{doi:10.1017/9781108591034}

\bibitem[Firth(1993)]{firth1993}
Firth, D. (1993).
Bias reduction of maximum likelihood estimates.
\textit{Biometrika}, 80(1), 27--38.
\href{https://doi.org/10.1093/biomet/80.1.27}{doi:10.1093/biomet/80.1.27}

\bibitem[Fisher(1922)]{fisher1922}
Fisher, R.\,A. (1922).
On the mathematical foundations of theoretical statistics.
\textit{Philosophical Transactions of the Royal Society A}, 222, 309--368.

\bibitem[Fong et~al.(2023)]{fong2023}
Fong, E., Holmes, C., and Walker, S.\,G. (2023).
Martingale posterior distributions.
\textit{Journal of the Royal Statistical Society: Series B}, 85(5), 1357--1391.
\href{https://doi.org/10.1093/jrsssb/qkad005}{doi:10.1093/jrsssb/qkad005}

\bibitem[Gelman et~al.(2008)]{gelman2008}
Gelman, A., Jakulin, A., Pittau, M.\,G., and Su, Y.-S. (2008).
A weakly informative default prior distribution for logistic and other regression models.
\textit{The Annals of Applied Statistics}, 2(4), 1360--1383.
\href{https://doi.org/10.1214/08-AOAS191}{doi:10.1214/08-AOAS191}

\bibitem[Goodfellow et~al.(2016)]{goodfellow2016}
Goodfellow, I., Bengio, Y., and Courville, A. (2016).
\textit{Deep Learning}.
MIT Press, Cambridge, MA.

\bibitem[Heinze and Schemper(2002)]{heinze2002}
Heinze, G. and Schemper, M. (2002).
A solution to the problem of separation in logistic regression.
\textit{Statistics in Medicine}, 21(16), 2409--2419.
\href{https://doi.org/10.1002/sim.1047}{doi:10.1002/sim.1047}

\bibitem[Hochreiter and Schmidhuber(1997)]{hochreiter1997}
Hochreiter, S. and Schmidhuber, J. (1997).
Long short-term memory.
\textit{Neural Computation}, 9(8), 1735--1780.

\bibitem[Howard et~al.(2021)]{howard2021}
Howard, S.\,R., Ramdas, A., McAuliffe, J., and Sekhon, J. (2021).
Time-uniform, nonparametric, nonasymptotic confidence sequences.
\textit{The Annals of Statistics}, 49(2), 1055--1080.
\href{https://doi.org/10.1214/20-AOS1991}{doi:10.1214/20-AOS1991}

\bibitem[Kallenberg(2002)]{kallenberg2002}
Kallenberg, O. (2002).
\textit{Foundations of Modern Probability} (2nd ed.).
Springer, New York.

\bibitem[Lehmann and Casella(1998)]{lehmanncasella1998}
Lehmann, E.\,L. and Casella, G. (1998).
\textit{Theory of Point Estimation} (2nd ed.).
Springer, New York.

\bibitem[Murphy(2003)]{murphy2003}
Murphy, S.\,A. (2003).
Optimal dynamic treatment regimes.
\textit{Journal of the Royal Statistical Society: Series B}, 65(2), 331--355.
\href{https://doi.org/10.1111/1467-9868.00389}{doi:10.1111/1467-9868.00389}

\bibitem[NCHS(2020)]{nhanes2017}
National Center for Health Statistics (2020).
NHANES 2017--2018: Laboratory Procedures Manual.
National Center for Health Statistics, Centers for Disease
Control and Prevention, U.S. Department of Health and Human Services,
Hyattsville, MD.
\url{https://wwwn.cdc.gov/Nchs/Nhanes/2017-2018/PBCD_J.htm}

\bibitem[Robins(2004)]{robins2004}
Robins, J.\,M. (2004).
Optimal structural nested models for optimal sequential decisions.
In D.\,Y. Lin and P.\,J. Heagerty (eds.),
\textit{Proceedings of the Second Seattle Symposium in Biostatistics},
Lecture Notes in Statistics, vol. 179, pp. 189--326.
Springer, New York.
\href{https://doi.org/10.1007/978-1-4419-9076-1_11}{doi:10.1007/978-1-4419-9076-1\_11}

\bibitem[Robbins(1970)]{robbins1970}
Robbins, H. (1970).
Statistical methods related to the law of the iterated logarithm.
\textit{The Annals of Mathematical Statistics}, 41(5), 1397--1409.
\href{https://doi.org/10.1214/aoms/1177696786}{doi:10.1214/aoms/1177696786}

\bibitem[Siegmund(1985)]{siegmund1985}
Siegmund, D. (1985).
\textit{Sequential Analysis: Tests and Confidence Intervals}.
Springer, New York.
\href{https://doi.org/10.1007/978-1-4613-9549-7}{doi:10.1007/978-1-4613-9549-7}

\bibitem[Ville(1939)]{ville1939}
Ville, J. (1939).
\textit{{\'E}tude Critique de la Notion de Collectif}.
Gauthier-Villars, Paris.

\bibitem[Wald(1945)]{wald1945}
Wald, A. (1945).
Sequential tests of statistical hypotheses.
\textit{The Annals of Mathematical Statistics}, 16(2), 117--186.
\href{https://doi.org/10.1214/aoms/1177731118}{doi:10.1214/aoms/1177731118}

\bibitem[Wald(1947)]{wald1947}
Wald, A. (1947).
\textit{Sequential Analysis}.
John Wiley \& Sons, New York.

\bibitem[Wald and Wolfowitz(1948)]{waldwolfowitz1948}
Wald, A. and Wolfowitz, J. (1948).
Optimum character of the sequential probability ratio test.
\textit{The Annals of Mathematical Statistics}, 19(3), 326--339.
\href{https://doi.org/10.1214/aoms/1177730197}{doi:10.1214/aoms/1177730197}

\bibitem[Waudby-Smith and Ramdas(2023)]{waudbysmith2023}
Waudby-Smith, I. and Ramdas, A. (2023).
Estimating means of bounded random variables by betting.
\textit{Journal of the Royal Statistical Society: Series B}, 85(1), 1--26.
\href{https://doi.org/10.1093/jrsssb/qkac007}{doi:10.1093/jrsssb/qkac007}

\bibitem[Williams(1991)]{williams1991}
Williams, D. (1991).
\textit{Probability with Martingales}.
Cambridge University Press, Cambridge.

\end{thebibliography}
\end{document}